\documentclass[journal]{IEEEtran}
\usepackage{graphicx}
\usepackage{booktabs}
\usepackage{cite}
\usepackage{capt-of}
\usepackage{algorithm}
\usepackage{algpseudocode}

\usepackage{amsmath,amssymb,amsfonts,bm}
\usepackage{subcaption}
\usepackage{booktabs}
\usepackage{textcomp}
\usepackage{xcolor}
\usepackage{multirow}

\usepackage{tikz}
\usetikzlibrary{decorations.pathreplacing}   
\newtheorem{proposition}{Proposition}
\newtheorem{theorem}{Theorem}
\newtheorem{assumption}{Assumption}
\newtheorem{remark}{Remark}
\newtheorem{lemma}{Lemma}

\long\def\comment#1{}

\newfont{\bbb}{msbm10 scaled 700}

\newfont{\bb}{msbm10 scaled 1100}

\renewcommand{\det}{{\hbox{det}}}

\renewcommand{\arg}{{\hbox{arg}}}

\begin{document}

\title{Positioning a Movable Antenna Without CSI: A Kernelized Bandit Under Costly Movement}

\author{Wonseok Choi,~\IEEEmembership{Graduate Student Member,~IEEE,} Yunseob Tae,~\IEEEmembership{Graduate Student Member,~IEEE,} \\Jeongjae Lee,~\IEEEmembership{Member,~IEEE,} and~Songnam~Hong,~\IEEEmembership{Senior Member,~IEEE}
        \thanks{W. Choi and Y. Tae contributed equally to this work.}
        \thanks{W. Choi, Y. Tae, and S. Hong are with the Department of Electronic Engineering, Hanyang University, Seoul, Korea. (e-mail: \{ryan4975, tys7524, snhong\}@hanyang.ac.kr).}

        \thanks{J. Lee is with the Department of Electronic Engineering, Hanyang University, Seoul, Korea, and the Ming Hsieh Department of Electrical Engineering, University of Southern California, Los Angeles, CA, USA. (e-mail: jl\_939@usc.edu).}
        
}

\maketitle

\begin{abstract}
Movable antenna (MA) systems reconfigure the wireless channel by repositioning the antenna within a confined region. The position optimization behind this capability has been studied under the premise that the channel at every candidate position is known. However, acquiring that knowledge is costly: a position can be measured only after the antenna has moved there, which takes many slots at the speed of the actuator, and the channel drifts meanwhile. We formulate MA positioning as a non-stationary kernelized bandit with a reachability constraint, in which exploration, tracking, and the reward lost in transit are coupled through a single physical action. We propose MoveUCB, which addresses the reachability constraint through a global target search, a persistence rule that holds the target across slots, and a transit-aware movement cost term. We prove that MoveUCB attains sublinear dynamic regret under the standard variation-budget condition of non-stationary bandits alone, with the physical parameters entering only through constants. Simulations against a dynamic-programming oracle and movement-unaware baselines show that MoveUCB attains the lowest regret with about half the travel of the closest baseline that moves.
\end{abstract}

\begin{IEEEkeywords}
Movable antenna, online learning, dynamic regret, kernelized bandits, reachability constraint.
\end{IEEEkeywords}

\section{Introduction}
\label{sec:intro}

In conventional antenna systems the propagation channel is given rather than chosen: the antenna occupies a fixed location, and the small-scale fading realized at that location cannot be influenced by the transceiver. Movable antenna (MA) systems~\cite{zhu2024movable} relax this constraint by mechanically repositioning the antenna within a confined region. Since the fading pattern decorrelates over distances on the order of a wavelength, even centimeter-scale repositioning can convert a deep fade into a favorable channel realization, and this additional spatial degree of freedom has been exploited for beamforming and rate maximization~\cite{ma2024multibeam,feng2024weighted}. Closely related fluid antenna systems~\cite{wong2021fluid,cheng2024sumrate} pursue the same spatial reconfiguration by electronically switching among predefined ports, and much of the position-optimization machinery is shared between the two.

The mechanical nature of MA actuation, however, sets it apart from a port switch. Whereas a switch is instantaneous, a repositioning proceeds at the speed of the actuator: the antenna traverses the region over many slots, transmitting from intermediate positions rather than from its destination, and drawing energy from the motor-driven mechanism along the way~\cite{ning2025movable}. As carrier frequencies rise and coherence times shrink, this overhead consumes a non-negligible fraction of the transmission time. A recent line of work has therefore incorporated the movement cost into the optimization itself: minimizing the repositioning delay for given target positions~\cite{li2026trajectory}, maximizing throughput or energy efficiency under movement-delay penalties~\cite{wang2026throughput,ding2026energy}, and, in our own prior work, movement-delay-aware sum-rate maximization for a linear MA array at the base station~\cite{tae2026delay}. These formulations, and much of the position optimization literature more broadly, share a common premise: the channel is known before the optimization is carried out, and hence so is the reward of every candidate antenna position.

This premise is difficult to sustain in practice: the reward at a new position can be measured only by relocating the antenna there. Obtaining channel state information (CSI) across the spatial domain via pilots incurs prohibitive overhead, which has motivated CSI-free position optimization that works directly from received-signal measurements~\cite{zeng2025csi}. This method treats the position as the variable of a static optimization and refines it by zeroth-order gradient ascent, with the number of measurements as the figure of merit. The present setting differs from this static optimization in two respects. First, the reward drifts while the antenna is in transit, so a measurement becomes outdated before the next one is collected. Second, each measurement requires a relocation that is limited by the actuator speed and that costs transmission time. Position selection is therefore not a one-shot optimization but a sequential decision problem: the transceiver observes a noisy reward only at its current position, and must decide, based on this history, where to explore next despite the cost and delay of getting there.

The learning literature offers no ready solution to this problem. Kernelized bandits~\cite{srinivas2010,chowdhury2017} exploit spatial correlation to optimize an unknown function from point evaluations, but assume a stationary environment and free sampling of arbitrary points. Non-stationary bandit formulations~\cite{besbes2019,cheung2019} handle temporal drift through variation budgets or sliding windows, and kernelized versions~\cite{bogunovic2016} forget old observations by resetting or by discounting, yet all still permit the learner to play any arm at any slot. Online learning with switching costs~\cite{CesaBianchi2013} and its metric extensions~\cite{koren2017nips} penalize decision changes in proportion to the distance moved, but the penalty is an abstract cost rather than a physical constraint. None of these formulations captures the reachability limit that confines each decision to a neighborhood of the current one. Even a highly valuable position cannot be reached within a single slot if it is far away; the antenna must instead approach it gradually, over several slots. The MA position optimization problem lies at the intersection of all three---bandit feedback, non-stationary temporal drift, and physically constrained, costly actuation---an intersection that, to the best of our knowledge, remains unaddressed.

To bridge this gap, our main contributions are summarized as follows:
\begin{itemize}

\item We formulate online MA position learning as a non-stationary kernelized bandit under a reachability constraint, so that exploration, tracking, and the reward lost in transit are coupled through a single action. Performance is measured by the dynamic regret against an oracle subject to the same constraint and cost.

\item We propose {\bf MoveUCB}, which combines a physics-informed kernel with a movement cost term, a \emph{persistence rule} that keeps the target fixed until reached, and a \emph{two-stage selection}: a standard UCB rule confines the target to what is reachable in one slot and so never evaluates a distant position, whereas MoveUCB selects the target over the whole region and approaches it under this rule.

\item We prove that MoveUCB attains sublinear regret under the standard variation-budget condition alone. Since optimism controls the reward but not the distance, we introduce a \emph{reverse chain decomposition} of the information gain to bound the travel. The decomposition applies to any sliding-window kernelized bandit and may be of independent interest.

\item We compare MoveUCB against a dynamic-programming oracle, a movement-unaware kernelized UCB, and sweep-based repositioning, each tuned to its own optimum, and show that it attains the lowest regret at about half the travel of the closest competitor. An ablation confirms that the two-stage selection drives this gain, while the movement cost term and the persistence rule keep the search affordable at no measurable cost in regret.

\end{itemize}

The remainder of this paper is organized as follows. Section~\ref{sec:model} describes the system model and formulates the problem. Section~\ref{sec:algorithm} presents the proposed MoveUCB algorithm, and Section~\ref{sec:analysis} establishes its regret guarantee. Section~\ref{sec:sim} reports the simulation results, and Section~\ref{sec:conclusion} concludes the paper.

\section{System Model and Problem Formulation}
\label{sec:model}

This section describes the physical model of a movable antenna (MA) system and casts MA position optimization as an online learning problem with a reachability constraint.

\subsection{System Model}
\label{subsec:sysmodel}

We consider a point-to-point communication scenario in which the receiver is equipped with a single MA that can be positioned along a linear region $\mathcal{D}\triangleq[0,D]$ of length $D$. Time is divided into slots of duration $T_{\mathrm{s}}$, indexed by $t\in\{1,2,\dots\}$.
The antenna position during slot $t$ is denoted by $x_{t}\in\mathcal{D}$, with a given initial position $x_{0}\in\mathcal{D}$ before the first slot. As illustrated in Fig.~\ref{fig:slot}, each slot comprises an actuation interval of duration $T_{\mathrm{a}}$, during which the antenna may be repositioned by its mechanical actuator of maximum speed $v$, and a transmission interval of duration $T_{\mathrm{s}}-T_{\mathrm{a}}$, during which the data transmission operates from the resulting position. With the \emph{stroke} $\rho_{\max}\triangleq vT_{\mathrm{a}}$ denoting the maximum displacement achievable in one slot, the position update is constrained to the reachable region
\begin{equation}
  x_{t}\in\mathcal{A}(x_{t-1})\triangleq
  \big[x_{t-1}-\rho_{\max},\,x_{t-1}+\rho_{\max}\big]\cap\mathcal{D}.
  \label{eq:reachable}
\end{equation}
The actuation interval $T_{\rm a}$ is provisioned for the maximum displacement $\rho_{\max}$ and held fixed across slots. {A policy may select a destination $z\in\mathcal D$ outside the
reachable region, $z\notin\mathcal A(x_{t-1})$. Reaching it then takes
$\lceil|z-x_{t-1}|/\rho_{\max}\rceil$ slots, on each of which the
antenna is \emph{in transit}: it advances toward $z$ as far as
\eqref{eq:reachable} permits and transmits from that intermediate
position rather than from $z$.}

\begin{figure}[t]
\centering
\includegraphics[width=0.85\columnwidth]{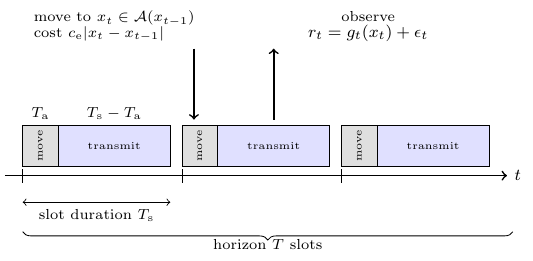}
\caption{Slot structure: repositioning within $T_{\mathrm{a}}$, subject to~\eqref{eq:reachable}, followed by transmission and one noisy reward observation.}
\label{fig:slot}
\end{figure}

Let $g_{t}(x)$ denote the reward obtained when the antenna is placed at position $x$ during slot $t$, e.g., the achievable rate
\begin{equation}
  g_{t}(x)=\bigg(1-\frac{T_{\mathrm{a}}}{T_{\mathrm{s}}}\bigg)
  \log_{2}\!\bigg(1+\frac{P\,|h_{t}(x)|^{2}}{\sigma_{n}^{2}}\bigg),
  \label{eq:reward}
\end{equation}
where $h_{t}(x)$ is the channel at position $x$ and slot $t$, $P$ is the transmit power, and $\sigma_{n}^{2}$ is the noise power. The common factor $1-T_{\mathrm{a}}/T_{\mathrm{s}}$ is the ratio of the transmission interval to the slot duration. The slot duration is assumed shorter than the channel coherence time, so that $h_{t}(\cdot)$ is a single realization within slot $t$, whereas the channel geometry evolves across slots. In each slot, the receiver observes only a noisy reward at its current antenna position,
\begin{equation}
  r_{t}=g_{t}(x_{t})+\epsilon_{t},
  \label{eq:observation}
\end{equation}
where $\epsilon_t$ accounts for the finite-sample estimation error of the reward over the transmission interval. Following the standard observation model in kernelized bandits~\cite{chowdhury2017}, $\epsilon_t$ is assumed to be zero-mean and conditionally sub-Gaussian given the history, as formalized in Assumption~\ref{as:noise}. Once the transmission interval of slot $t$ concludes, the receiver holds the history $\mathcal{F}_{t}=\{(x_{\tau},r_{\tau})\}_{\tau\le t}$, i.e., the positions it has visited and the noisy rewards collected, and must select the next position $x_{t+1}\in\mathcal{A}(x_{t})$ based on $\mathcal{F}_{t}$ alone.

\subsection{Problem Formulation}
\label{subsec:problem}

The receiver seeks a positioning policy $\pi$ that maximizes the reward accumulated over the horizon net of the actuation energy,
\begin{align}
  \max_{\pi}\ &\mathbb{E}\!\left[\sum_{t=1}^{T}\Big(g_{t}(x_{t})-c_{\mathrm{e}}\big|x_{t}-x_{t-1}\big|\Big)\right]
  \nonumber\\
  \text{s.t.}\ &x_{t}=\pi(\mathcal{F}_{t-1})\in\mathcal{A}(x_{t-1}),
  \label{eq:online}
\end{align} where repositioning the antenna incurs an energy penalty of $c_{\mathrm{e}}|x_{t}-x_{t-1}|$, the coefficient $c_{\mathrm{e}}>0$ converting the actuation energy per unit displacement into the units of the rate reward~\cite{ding2026energy}. Problem~\eqref{eq:online} cannot be solved directly, since $\{g_{t}\}$ is revealed only causally and then only through the noisy observations in~\eqref{eq:observation}.

As an upper bound on the achievable performance, we consider an \emph{oracle} that knows the entire reward sequence $\{g_{t}\}_{t=1}^{T}$ in advance while being subject to the same
reachability constraint. Its optimal cumulative reward is
\begin{equation}
  \mathrm{OPT}_{T} \triangleq
  \max_{\substack{\{x_{t}\}_{t=1}^{T}:\\ x_{t}\in\mathcal{A}(x_{t-1})}}
  \ \sum_{t=1}^{T}\Big(g_{t}(x_{t})-c_{\mathrm{e}}\big|x_{t}-x_{t-1}\big|\Big).
  \label{eq:oracle}
\end{equation}
Accordingly, we evaluate $\pi$ through its {\it dynamic regret} relative to the oracle,
\begin{equation}
  \mathcal{R}_{T}(\pi)
  \triangleq
  \mathrm{OPT}_{T}
  -\mathbb{E}\!\left[\sum_{t=1}^{T}
  \Big(g_{t}(x_{t})-c_{\mathrm{e}}\big|x_{t}-x_{t-1}\big|\Big)\right],
  \label{eq:regret}
\end{equation}
where the expectation is over the observation noise and any randomization in the policy. Our objective is to develop an online positioning policy that minimizes $\mathcal{R}_{T}(\pi)$.


\section{Proposed MoveUCB}
\label{sec:algorithm}

Three mechanisms drive the regret in \eqref{eq:regret}: (i) insufficient exploration, in which a favorable position goes unidentified because its reward is not sufficiently sampled; (ii) insufficient tracking, in which the antenna remains at a position whose reward has since degraded; and (iii) excessive movement, in which the reward gain at the destination does not compensate for the reward lost in transit and the actuation energy consumed. Mechanisms (i) and (ii) arise in stationary and non-stationary bandit problems alike, whereas mechanism (iii) is specific to a decision variable that must physically move, and~\eqref{eq:regret} penalizes it whenever the learner travels more than the oracle does. The proposed algorithm, referred to as Movement-Aware Upper Confidence Bound (\textbf{MoveUCB}), addresses these mechanisms through four components, each derived from the regret analysis of Section~\ref{sec:analysis}: a physics-informed kernel, a movement cost term, a persistence rule, and a two-stage selection. We describe each in turn.

\subsection{Reward Estimation with a Physics-Informed Kernel}
\label{subsec:kernel}

The history $\mathcal{F}_{t}$ must be exploited along two axes: spatial correlation allows an observation at $x_{\tau}$ to inform nearby positions, so that the reward over the continuum can be interpolated from finitely many samples, while temporal variation makes old observations outdated. The reward is therefore estimated by kernel ridge regression over a sliding window of the $w$ most recent observations, adapting the standard estimator in kernelized
bandits~\cite{srinivas2010,chowdhury2017}. Let $\mathcal{W}_{t}=\{\tau:\max(1,t-w+1)\le\tau\le t\}$ denote the active window, including the observation just collected at $x_{t}$. A Gaussian process prior $\mathcal{GP}(0,k)$ is placed on $g_{t}$ and conditioned on the windowed observations, which yields a point estimate and its uncertainty at every position, the two quantities required by the selection rule of Section~\ref{subsec:selection}. With the kernel vector $\mathbf{k}_{t}(x)=[k(x,x_{\tau})]_{\tau\in\mathcal{W}_{t}}$, the Gram matrix $\mathbf{K}_{t}=[k(x_{\tau},x_{\tau'})]_{\tau,\tau'\in\mathcal{W}_{t}}$, and the observation vector $\mathbf{r}_{t}=[r_{\tau}]_{\tau\in\mathcal{W}_{t}}$, the posterior mean and variance are given by
\begin{align}
  \mu_{t}(x)
  &=\mathbf{k}_{t}(x)^{\mathrm{T}}
    \big(\mathbf{K}_{t}+\varrho\mathbf{I}\big)^{-1}\mathbf{r}_{t},
  \label{eq:posterior_mean}\\
  \sigma_{t}^{2}(x)
  &=k(x,x)-\mathbf{k}_{t}(x)^{\mathrm{T}}
    \big(\mathbf{K}_{t}+\varrho\mathbf{I}\big)^{-1}\mathbf{k}_{t}(x),
  \label{eq:posterior_var}
\end{align}
where $\varrho>0$ serves as a ridge regularization parameter~\cite{rasmussen2006}. Both $\mu_{t}$ and $\sigma_{t}$ are defined over the entire region $\mathcal{D}$, with $\sigma_{t}(x)$ small near recently visited positions and large in unexplored areas. The window length $w$ balances the estimation noise against the temporal staleness of the observations, a trade-off analyzed in Section~\ref{sec:analysis}.

The estimate~\eqref{eq:posterior_mean} is a linear combination of the functions $k(\cdot,x_{\tau})$, so the kernel fixes the basis in which $g_{t}$ is represented. If the basis does not oscillate at the same rates as $g_{t}$, the reproducing kernel Hilbert space (RKHS) norm $\|g_{t}\|_{\mathcal{H}_{k}}$ becomes large, which in turn inflates the confidence parameter of the regret bound. We therefore identify those rates first. Consider a channel composed of $L$ propagation paths under the field-response model~\cite{zhu2023modeling}, each represented by its mean angle of arrival:
\begin{equation}
  h(x)=\sum_{\ell=1}^{L}\alpha_{\ell}\,e^{\jmath\frac{2\pi}{\lambda}x\cos\bar{\theta}_{\ell}},
  \label{eq:multipath}
\end{equation}
where $\alpha_{\ell}$ and $\bar{\theta}_{\ell}$ denote the complex gain and the mean angle of arrival of the $\ell$-th path, and $\lambda$ is the carrier wavelength. The reward~\eqref{eq:reward} depends on position only through $|h(x)|^{2}$, which expands as
\begin{equation}
|h(x)|^{2} =\sum_{\ell=1}^{L}|\alpha_{\ell}|^{2} +\;2\!\!\sum_{\ell<\ell'}\!\!|\alpha_{\ell}\alpha_{\ell'}|
    \cos\!\Big(\tfrac{2\pi}{\lambda}\Delta_{\ell\ell'}\,x +\phi_{\ell\ell'}\Big),
  \label{eq:power_profile}
\end{equation}
where $\Delta_{\ell\ell'}\triangleq\cos\bar{\theta}_{\ell}-\cos\bar{\theta}_{\ell'}$ and $\phi_{\ell\ell'}\triangleq\arg(\alpha_{\ell}\alpha_{\ell'}^{*})$.

The oscillating terms are indexed by \emph{differences} of directional cosines rather than by the cosines themselves, and their frequencies therefore lie in $[0,2/\lambda]$. The finest fringe has a period of $\lambda/2$, which arises when two paths arrive from opposite directions. When a single path dominates, only the constant term survives in \eqref{eq:power_profile}. The logarithm in~\eqref{eq:reward} adds harmonics at sums and differences of these frequencies, but such harmonics decay quickly. The band that the kernel must cover is therefore $[0,2/\lambda]$. Moreover, a physical path is a cluster spanning a small angular spread rather than a plane wave. Each fringe therefore averages nearby frequencies and decorrelates over a coherence length inversely proportional to that spread. The aggregate power, being independent of the relative phases, varies over a much longer scale.

Based on these observations, we design the physics-informed kernel to provide a nonoscillatory component for the aggregate power, a second one for the low-frequency residual introduced by the logarithm, and one damped sinusoid per fringe, which yields
\begin{align}
  k(x,x')&=\sigma_{c}^{2} +\sigma_{0}^{2}\,e^{-\frac{(x-x')^{2}}{2\varsigma_{0}^{2}}}\nonumber\\
  &+\sum_{q=1}^{Q}\sigma_{q}^{2}\,
    \cos\!\bigg(\frac{2\pi(x-x')}{\lambda_{q}}\bigg)\,
    e^{-\frac{(x-x')^{2}}{2\varsigma_{q}^{2}}},
  \label{eq:kernel}
\end{align}
where $\sigma_{c}^{2}$ weights the constant term, which reproduces the aggregate power exactly as the kernel of the one-dimensional space of constants, and $\lambda_{q}$ and $\varsigma_{q}$ are the period and the damping length of the $q$-th fringe. The Gaussian damping is adopted for its smoothness and its exponentially decaying spectrum, both of which the analysis relies on. Each summand is positive definite as a product of a cosine and a squared exponential kernel, and so is~\eqref{eq:kernel}. We denote its total output scale by
\begin{equation}
 \sigma^{2}_{f}\triangleq\sup_{x\in\mathcal{D}}k(x,x) =\sigma_{c}^{2}+\sigma_{0}^{2}+\sum_{q=1}^{Q}\sigma_{q}^{2},
  \label{eq:outputscale}
\end{equation}
which sets the amplitude with which a member of $\mathcal{H}_{k}$ varies across the region. Without the nonoscillatory terms the kernel places no mass at zero frequency, where the reward carries most of its energy, and without the oscillating terms it cannot represent the fringes. The squared exponential, the default choice in kernelized bandits~\cite{srinivas2010,chowdhury2017}, lacks the latter and is recovered from~\eqref{eq:kernel} by setting $Q=0$ and $\sigma_c^2=0$.


The kernel is thus specified by two quantities that are stable over many coherence times---the angular sector $[\theta_{\min},\theta_{\max}]$ and the angular spread $\sigma_{\mathrm{as}}$ of a cluster---together with the output scale $\sigma_{f}^{2}$ measured during the warm-up. The sector bounds the directional-cosine difference of any two paths by $\delta_{\max}=\cos\theta_{\min}-\cos\theta_{\max}$, and the fringe periods are therefore placed at $\lambda_{q}=\lambda/\delta_{q}$ with the $\delta_{q}$ spanning $[0,\delta_{\max}]$. The spread blurs each fringe frequency by a proportional amount, which limits how far the fringe stays coherent and gives the common damping length $\varsigma_{q}=\Theta(\lambda/\sigma_{\mathrm{as}})$, whereas $\varsigma_{0}$ needs only to span the residual low-frequency content and is a few multiples of the wavelength. The weights are fixed fractions of $\sigma_{f}^{2}$, which is the sample variance of the warm-up rewards net of $\sigma_{\epsilon}^{2}$. None of the kernel hyperparameters requires knowledge of the instantaneous channel.

The proposed kernel satisfies two mathematical properties required by the regret analysis, both by construction. It is twice differentiable at the origin, and its spectral density is a point mass at the origin plus $2Q+1$ Gaussian lobes with centers confined to $[-2/\lambda,2/\lambda]$. Section~\ref{sec:analysis} uses the differentiability to control the grid discretization error and the confined spectrum to bound the maximum information gain.

\subsection{Movement-Aware Position Optimization}
\label{subsec:selection}

The estimate and its uncertainty in~\eqref{eq:posterior_mean}--\eqref{eq:posterior_var} are combined into the upper confidence bound (UCB)
\begin{equation}
  \mathrm{UCB}_{t}(x)\triangleq\mu_{t}(x)+\beta\,\sigma_{t}(x),
  \label{eq:ucb}
\end{equation}
where $\beta>0$ is a confidence parameter. 
With an appropriate $\beta$, $\mathrm{UCB}_{t}(x)$ upper-bounds $g_{t}(x)$ with high probability, providing an optimistic estimate of the reward attainable at $x$. A larger $\beta$ weights the uncertainty more heavily and thus promotes exploration. While~\eqref{eq:ucb} is the standard acquisition function in kernelized bandits~\cite{srinivas2010,chowdhury2017}, it does not account for the physical constraints of a movable antenna: a relocation incurs reward loss in transit, and a distant target is reached only after several slots. Taking these constraints into account, the following subsections incorporate the reward loss into the acquisition function, suppress unnecessary target changes in transit, and separate the choice of target from the step taken toward it.

\subsubsection{Accounting for a relocation cost}
\label{subsubsec:price}

A displacement of $|x-x_{t}|$ incurs three costs, all proportional to the distance. The first is the actuation energy $c_{\mathrm{e}}|x-x_{t}|$ of Section~\ref{subsec:problem}. The second is the reward lost while in transit. The actuator covers at most $\rho_{\max}$ per slot, so a journey of length $|x-x_{t}|$ takes $|x-x_{t}|/\rho_{\max}$ slots, during which the antenna transmits from intermediate positions rather than from the destination. Two positions in the region differ in reward by an amount on the order of $\sigma_{f}$, the output scale~\eqref{eq:outputscale} of the kernel, and each transit slot therefore costs $\Theta(\sigma_{f})$ relative to the destination. The third is the staleness of the estimate, as $g_{t}$ drifts over those same $|x-x_{t}|/\rho_{\max}$ slots and a distant target is evaluated under a reward function that has since changed.

Only the first of the three is an energy expenditure; the other two arise because the actuator is slow. The two kinds also enter the objective differently: the energy appears in the explicit term of~\eqref{eq:regret}, whereas the transit-induced costs are reflected through $g_{t}(x_{t})$, since the antenna collects the reward of the position it occupies rather than that of its target. A per-slot acquisition function, which compares candidates under $g_{t}$ alone, cannot capture these two costs, and they must be incorporated explicitly as a function of the distance. Both scale as $\sigma_{f}/\rho_{\max}$ per unit distance, since they accrue over the $1/\rho_{\max}$ slots that each unit of travel occupies. We therefore combine them under a single weight $\eta$ and define the movement cost coefficient
\begin{equation}
  \tilde{c}\;\triangleq\;\frac{1}{w}\Big(c_{\mathrm{e}}
  +\frac{\eta\,\sigma_{f}}{\rho_{\max}}\Big),
  \label{eq:price}
\end{equation}
whose first term is the actuation energy and whose second term collects the two transit-induced costs under a constant weight $\eta>0$. The acquisition function of MoveUCB is then
\begin{equation}
  F_{t}(x)\;\triangleq\;\mathrm{UCB}_{t}(x)-\tilde{c}\,|x-x_{t}|.
  \label{eq:acq-fn}
\end{equation}
The window length $w$ serves as the divisor for both terms, since an observation is treated as informative for $w$ slots: it is over this interval that a relocation returns its benefit, and over the same interval that the estimate motivating the relocation decays. All quantities in~\eqref{eq:price} are already available, as $\sigma_{f}$ is the kernel output scale~\eqref{eq:outputscale} and $\rho_{\max}$ and $w$ are inputs to Algorithm~\ref{alg:moveucb}. In particular, no online estimate of the drift rate is required.

\subsubsection{Target persistence}
\label{subsubsec:hysteresis}

Optimism alone does not bound the movement, because the estimator forgets. The window keeps only the $w$ most recent observations, so an observation collected more than $w$ slots ago is no longer included in the estimate, and the uncertainty $\sigma_t$ at that position grows back. Its upper confidence bound may rise accordingly, not because the position has become better but because uncertainty about its reward has increased. Any position the antenna leaves can therefore become attractive again, and an acquisition function maximized afresh at every slot can send the antenna back to it. This becomes a problem in its own right when a journey takes many slots. The antenna needs approximately $|x-x_{t}|/\rho_{\max}$ slots to reach a target, during which the uncertainty elsewhere can regrow. If a second position becomes the maximizer before the first is reached, the antenna may reverse direction before reaching its target, and repeated redirection can leave it traveling continuously without ever occupying a position it selected.

We therefore make the target persistent. Holding it unconditionally would prevent the antenna from responding to a position that has genuinely become better, and the target $z_{t-1}$ of the previous slot is instead re-evaluated under the current acquisition function $F_{t}$ and replaced only if some position is better by a margin $\Delta>0$,
\begin{equation}
  z_{t}=
  \begin{cases}
    z_{t-1}, & \text{if }\;
      F_{t}(z_{t-1})\;\ge\;\displaystyle\max_{x\in\mathcal{D}}F_{t}(x)
      -\Delta,\\[6pt]
    \displaystyle\arg\max_{x\in\mathcal{D}}F_{t}(x), & \text{otherwise,}
  \end{cases}\label{eq:hysteresis}
\end{equation} with $z_{0}\triangleq x_{0}$. A lead of a small margin, the typical effect of one slot of drift or variance regrowth, is what would redirect the antenna mid-journey, and~\eqref{eq:hysteresis} suppresses such redirection while replacing a target that has genuinely fallen behind. The margin also bounds the loss incurred by holding: since~\eqref{eq:hysteresis} retains a target only while it is within $\Delta$ of the best available position, holding it loses at most $\Delta$ per slot in acquisition value. The two effects appear as the $\Delta T$ term and the travel term of Theorem~\ref{thm:regret}, and Section~\ref{subsec:main} balances them.


\begin{remark}[On bounding the travel]
\label{rem:budget}
Two natural alternatives to~\eqref{eq:hysteresis} suggest themselves. A \emph{travel budget} freezes the antenna once a fixed per-epoch allowance is exhausted, which bounds the movement in every realization but forfeits up to $g_{\max}$ on every frozen slot as the channel drifts away; it saves the energy of a stroke at the expense of the reward of a slot, and is beneficial only when $c_{\mathrm{e}}\rho_{\max}\gtrsim g_{\max}$. A \emph{distance penalty}, as in switching-cost and movement-cost bandits~\cite{CesaBianchi2013,koren2017nips}, avoids the freeze but does not bound the travel on its own, since it must vanish with the horizon whereas the exploration bonus that forgetting restores does not. The proposed persistence rule instead limits how often the target may change, which bounds the travel through the switch count of Lemma~\ref{lem:switch} without ever immobilizing the antenna.
\end{remark}


\begin{algorithm}[t]
\caption{Movement-Aware UCB (MoveUCB)}
\label{alg:moveucb}
\begin{algorithmic}[1]
\Require $\rho_{\max}$, $c_{\mathrm{e}}$, $k$, $w$, $\eta$, $\Delta$,
  $\beta$, $\varrho$, $\mathcal{X}_{g}$, $x_{0}\in\mathcal{X}_{g}$
\State $\mathcal{W}\gets\emptyset$,\quad $z\gets x_{0}$,\quad
  $\tilde{c}\gets\big(c_{\mathrm{e}}+\eta\sigma_{f}/\rho_{\max}\big)/w$
\For{$t=1,2,\dots,T$}
  \State Observe $r_{t}=g_{t}(x_{t})+\epsilon_{t}$; update
    $\mathcal{W}$ to the $w$ most recent samples
  \State Compute $\mu_{t}$, $\sigma_{t}$ on $\mathcal{X}_{g}$
    via~\eqref{eq:posterior_mean}--\eqref{eq:posterior_var}
  \State $F_{t}(x)\gets\mu_{t}(x)+\beta\sigma_{t}(x)
    -\tilde{c}\,|x-x_{t}|$ on $\mathcal{X}_{g}$
    \Comment{\eqref{eq:acq-fn}}
  \State $z^{\star}\gets\arg\max_{x\in\mathcal{X}_{g}}F_{t}(x)$
  \If{$F_{t}(z)<F_{t}(z^{\star})-\Delta$}
    $z\gets z^{\star}$
  \EndIf
  \Comment{\eqref{eq:hysteresis}}
  \State $x_{t+1}\gets x_{t}+\operatorname{sgn}(z-x_{t})
    \min\big(\rho_{\max},|z-x_{t}|\big)$
    \Comment{\eqref{eq:approach}}
\EndFor
\end{algorithmic}
\end{algorithm}

\subsubsection{Two-stage selection}
\label{subsubsec:twostage}

With the target $z_{t}$ of~\eqref{eq:hysteresis} in hand, the antenna advances toward it as far as the actuator permits:
\begin{equation}
  x_{t+1}=x_{t}+\operatorname{sgn}(z_{t}-x_{t})\cdot
    \min\big(\rho_{\max},\,|z_{t}-x_{t}|\big),
  \label{eq:approach}
\end{equation}
which is the projection of $z_{t}$ onto the reachable interval $\mathcal{A}(x_{t})$ of~\eqref{eq:reachable}. The separation of the two stages is what enables the selection of a distant target. Maximizing the acquisition function directly over $\mathcal{A}(x_{t})$ would evaluate only positions within one slot of travel, and the antenna could then neither set out toward a distant peak nor escape a local one. In~\eqref{eq:hysteresis} the maximization ranges over the entire region $\mathcal{D}$, and only the step~\eqref{eq:approach} is subject to the actuator limit. The separation also underlies the analysis: the optimism argument behind Theorem~\ref{thm:regret} compares the selected target against an arbitrary fixed position in $\mathcal{D}$, which is admissible only because~\eqref{eq:hysteresis} maximizes over the entire region rather than over $\mathcal{A}(x_{t})$. The maximization in~\eqref{eq:hysteresis} is one-dimensional and is solved by exhaustive evaluation on the uniform grid $\mathcal{X}_{g}\triangleq\{0,\Delta_{g},2\Delta_{g},\dots,D\}$ with $\Delta_{g}\ll\lambda/2$. The discretization error is negligible, since the acquisition function inherits from~\eqref{eq:kernel} a finest fringe of period $\lambda/2$.

\subsubsection{Complexity}
\label{subsubsec:complexity}

The complete procedure is summarized in Algorithm~\ref{alg:moveucb}. Its per-slot complexity is dominated by the posterior computation~\eqref{eq:posterior_mean}--\eqref{eq:posterior_var}: an $O(w^{3})$ factorization of the regularized Gram matrix, after which each grid point costs $O(w^{2})$, giving $O(w^{3}+Gw^{2})$ per slot for a grid of $G=\lceil D/\Delta_{g}\rceil+1$ points. Since the window slides by one sample per slot, the factorization admits a rank-one update and downdate, and the cubic term need not be recomputed at every slot. The movement-aware components add nothing to this cost: $\tilde{c}$ is a scalar computed once at initialization, and the persistence test in line~7 reuses the acquisition function already tabulated on the grid.

\section{Regret Analysis}
\label{sec:analysis}

We now analyze the regret of MoveUCB. The estimation and staleness terms are handled as in non-stationary kernelized bandits. The travel term is new: the persistence rule ties it to the number of target switches, and bounding that number requires the reverse direction of the information-gain decomposition.

\subsection{Preliminaries}
\label{subsec:prelim}

The analysis rests on two assumptions, one on the regularity of the reward and one on the observation noise.

\begin{assumption}[Reward regularity]
\label{as:reward}
For every $t$, the reward function $g_{t}$ belongs to the reproducing kernel Hilbert space $\mathcal{H}_{k}$ of the kernel~\eqref{eq:kernel}, with $\|g_{t}\|_{\mathcal{H}_{k}}\le B$ and $|g_{t}(x)|\le g_{\max}$ for all $x\in\mathcal{D}$.
\end{assumption}

\begin{assumption}[Observation noise]
\label{as:noise}
The noise sequence $\{\epsilon_{t}\}$ is conditionally $R$-sub-Gaussian given the history.
\end{assumption}

Assumption~\ref{as:reward} formalizes the spatial correlation, and $B$ is small only when the kernel places mass on the fringe band $[0,2/\lambda]$ of~\eqref{eq:power_profile}, as~\eqref{eq:kernel} does. Assumption~\ref{as:noise} covers the finite-sample estimation error of the reward. The small-scale fading is not part of $\epsilon_{t}$, since it is a deterministic function of position within a slot and is therefore carried by $g_{t}$. The reward sequence $\{g_{t}\}_{t=1}^{T}$ is treated as fixed but unknown rather than drawn from a distribution, and~\eqref{eq:regret} is a per-sequence guarantee, constrained only through the \emph{variation budget}~\cite{besbes2019,cheung2019} defined as
\begin{equation}
  V_{T}\triangleq\sum_{t=1}^{T-1}\big\|g_{t+1}-g_{t}\big\|_{\infty},
  \label{eq:variation}
\end{equation}
which accumulates the slot-to-slot drift without imposing any statistical drift model. The bound of this section is sublinear whenever $V_{T}=o(T)$, which covers the slowly time-varying regime in which the channel geometry changes gradually across slots while remaining persistent enough to be tracked.

A central quantity in kernelized bandit analysis~\cite{srinivas2010,chowdhury2017} is the maximum information gain from $n$ samples,
\begin{equation}
  \gamma_{n}\triangleq\max_{\{z_{1},\dots,z_{n}\}\subset\mathcal{D}}
  \tfrac{1}{2}\log\det\!\big(I+\varrho^{-1}
  [k(z_{i},z_{j})]_{i,j=1}^{n}\big).
  \label{eq:infogain}
\end{equation}
For the kernel~\eqref{eq:kernel}, the point mass at the origin contributes $O(\log n)$ and the $2Q+1$ Gaussian lobes are dominated up to a constant by a squared exponential of length scale $\min_q\varsigma_q$, so $\gamma_n=O((\log n)^2)$~\cite{srinivas2010}. It appears below at $n=w$, since the estimator retains at most $w$ samples at any time.


The analysis requires two sums of posterior uncertainty at sampled positions, one conditioning each position on those that precede it and the other on those that follow it. Both follow from the decomposition below. Since~\eqref{eq:infogain} is a determinant rather than a mutual information, the decomposition is algebraic and therefore applies to adaptively chosen positions. Its forward direction is standard~\cite{srinivas2010}, whereas the reverse direction is what Section~\ref{subsec:travel} requires.

\begin{lemma}[Chain decomposition of the information gain]
\label{lem:chain}
Let $x_{1},\dots,x_{n}\in\mathcal{D}$ be any positions, adaptively chosen or otherwise, and write $\sigma_{\mathcal{S}}(x)$ for the posterior standard deviation at $x$ given the positions indexed by $\mathcal{S}$, the empty conditioning being read as the prior. Then
\begin{align}
  &\sum_{i=1}^{n}\log\big(1+\varrho^{-1}\sigma_{1:i-1}^{2}(x_{i})\big)
  =\log\det\big(I+\varrho^{-1}K\big)\nonumber\\
  &=\sum_{i=1}^{n}\log\big(1+\varrho^{-1}\sigma_{i+1:n}^{2}(x_{i})\big)
  \;\le\;2\gamma_{n},
  \label{eq:chain}
\end{align}
where $K=[k(x_{i},x_{j})]_{i,j=1}^{n}$, and consequently
\begin{equation}
  \sum_{i=1}^{n}\sigma_{1:i-1}^{2}(x_{i})=O(\gamma_{n}),
  \qquad
  \sum_{i=1}^{n}\sigma_{i+1:n}^{2}(x_{i})=O(\gamma_{n}).
  \label{eq:chain-sum}
\end{equation}
\end{lemma}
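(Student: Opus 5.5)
The plan is to write $\det(I+\varrho^{-1}K)$ as a product of one-dimensional factors by repeated Schur complements, once peeling points off in the forward order and once in the reverse order. Let $K_{\mathcal S}$ denote the Gram matrix restricted to the index set $\mathcal S$. For the leading block $\{1,\dots,i-1\}$ and the new point $x_i$, the block determinant identity gives
\begin{equation*}
\det\big(\varrho I+K_{1:i}\big)=\det\big(\varrho I+K_{1:i-1}\big)\,\Big(\varrho+k(x_i,x_i)-\mathbf k_{1:i-1}(x_i)^{\mathrm T}\big(\varrho I+K_{1:i-1}\big)^{-1}\mathbf k_{1:i-1}(x_i)\Big).
\end{equation*}
The bracket equals $\varrho+\sigma^2_{1:i-1}(x_i)$ by the definition~\eqref{eq:posterior_var}, with the ridge parameter $\varrho$ also serving as the noise level. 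We telescope over $i=1,\dots,n$ and divide by $\varrho^{n}$. This gives $\det(I+\varrho^{-1}K)=\prod_i(1+\varrho^{-1}\sigma^2_{1:i-1}(x_i))$, and taking logs yields the first equality.

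For the reverse direction, observe that $\det(I+\varrho^{-1}K)$ is invariant under simultaneous permutation of rows and columns, i.e.\ under relabelling the points. We apply the forward identity to the reversed sequence $x_n,x_{n-1},\dots,x_1$. The point $x_i$ is then conditioned on exactly the points $x_{i+1},\dots,x_n$ that precede it in the new order. The posterior variance depends only on the set of conditioning points, not their order, so the factor is $1+\varrho^{-1}\sigma^2_{i+1:n}(x_i)$, which gives the middle equality.

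Adaptivity plays no role, because every identity above holds pointwise for any fixed realization of $x_1,\dots,x_n$. The bound $\le 2\gamma_n$ then follows directly from definition~\eqref{eq:infogain}, since $\{x_1,\dots,x_n\}$ is one admissible choice in the maximum. If points repeat, the maximum is over multisets, or equivalently over points with repetition allowed, as is standard.

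For~\eqref{eq:chain-sum}, each variance is bounded by the prior variance: $\sigma^2_{\mathcal S}(x)\le k(x,x)\le\sigma_f^2$ by~\eqref{eq:outputscale}. For $s\in[0,\varrho^{-1}\sigma_f^2]$, concavity of $\log(1+s)$ gives $s\le C\log(1+s)$ with
\begin{equation*}
C=\frac{\varrho^{-1}\sigma_f^2}{\log(1+\varrho^{-1}\sigma_f^2)}.
\end{equation*}
Hence $\sigma^2\le \varrho C\log(1+\varrho^{-1}\sigma^2)$. Summing over $i$ and applying~\eqref{eq:chain} gives both sums bounded by $2\varrho C\gamma_n=O(\gamma_n)$.

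The only real content is the reverse direction, and the step that needs care is checking that the Schur-complement factor for $x_i$ within the permuted matrix is exactly the posterior variance given the set $\{i+1,\dots,n\}$. This should follow from the same block identity once permutation invariance of the determinant is invoked.
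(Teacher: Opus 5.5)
Your proposal is correct and matches the paper's proof step for step: the Schur-complement recursion for the forward product, permutation invariance of the determinant for the reversed ordering, the definition of $\gamma_n$ for the bound $2\gamma_n$, and the chord inequality for the concave $\log(1+s)$ on $[0,\varrho^{-1}\sigma_f^2]$, which gives the same constant $2\varrho\gamma_n c/\log(1+c)$ with $c=\varrho^{-1}\sigma_f^2$. Your note on repeated positions, reading the maximum in the definition of $\gamma_n$ over multisets, makes explicit a point the paper's proof leaves unstated.
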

\begin{IEEEproof}
Partition $\varrho I+K$ after its first $i-1$ rows and columns. The Schur complement of the leading block is $\varrho+\sigma_{1:i-1}^{2}(x_{i})$, since it coincides with the posterior variance~\eqref{eq:posterior_var} at $x_{i}$ given the preceding positions, so the block determinant formula gives $\det(\varrho I+K_{1:i})=\det(\varrho I+K_{1:i-1})(\varrho+\sigma_{1:i-1}^{2}(x_{i}))$.
Applying this recursively for $i=n,\dots,1$ and dividing by $\varrho^{n}$ yields the first equality of~\eqref{eq:chain}. The determinant is invariant under a permutation of the positions, which permutes the rows and columns of $K$ alike, so expanding in the reversed ordering yields the second. The inequality is~\eqref{eq:infogain}, which maximizes over sets and therefore covers every realized sequence. For~\eqref{eq:chain-sum}, every posterior variance lies in $[0,\sigma_{f}^{2}]$ and $\log(1+s)\ge s\log(1+c)/c$ on $[0,c]$ with $c\triangleq\varrho^{-1}\sigma_{f}^{2}$, so each sum of variances is at most $2\varrho\gamma_{n}c/\log(1+c)$.
\end{IEEEproof}

\vspace{0.1cm}
Finally, slot $t$ is a \emph{dwell} slot if $x_{t+1}=z_{t}$, so that the antenna is at its target, and a \emph{transit} slot otherwise, so that it is still en route. We write $T_{\mathrm{trans}}$ for the number of transit slots.

\subsection{Bounding the Travel}
\label{subsec:travel}

The actuation energy and the reward lost in transit both enter the regret in proportion to the distance covered, and both are therefore controlled by bounding the total travel. We bound it in two steps,
\begin{equation*}
  \sum_{t=1}^{T}|x_{t+1}-x_{t}|
  \;\underset{\text{Lemma~\ref{lem:travel}}}{\le}\;
  D\,(N_{\mathrm{sw}}+1)
  \;\underset{\text{Lemma~\ref{lem:switch}}}{\le}\;
  D\Big(2+\frac{2S_{T}}{\Delta}\Big),
\end{equation*}
where $N_{\mathrm{sw}}$ is the number of target switches and $S_{T}$ measures how fast the acquisition function moves. The travel is a property of the trajectory and admits no direct bound, whereas $S_{T}$ is a property of the posterior and is bounded by the kernel arguments of Appendix~\ref{app:surface}. The transit count is bounded by the same travel, since each transit slot displaces the antenna by exactly $\rho_{\max}$.

The acquisition function alone does not bound the travel. As observations expire, the posterior variance regrows at the positions they occupied, and a learner maximizing~\eqref{eq:acq-fn} afresh at every slot is drawn back to them indefinitely. Since $\tilde{c}$ vanishes with the horizon, the resulting exploration bonus eventually exceeds the movement cost for any distance, and the travel is $\Theta(\rho_{\max}T)$ even for a static reward function. What bounds it is the persistence rule~\eqref{eq:hysteresis}, which ties the movement to the number of target switches $N_{\mathrm{sw}}\triangleq|\{t\le T:z_{t}\neq z_{t-1}\}|$.

\begin{lemma}[Travel is paid per switch]
\label{lem:travel}
In every realization,
\begin{align}
  \sum_{t=1}^{T}|x_{t+1}-x_{t}|\;\le\;D\,(N_{\mathrm{sw}}+1),
  \nonumber\\
  T_{\mathrm{trans}}\;\le\;\frac{D\,(N_{\mathrm{sw}}+1)}{\rho_{\max}}.
  \label{eq:travel}
\end{align}
\end{lemma}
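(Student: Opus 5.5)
The plan is to partition the horizon into \emph{target epochs}, the maximal runs of consecutive slots on which $z_{t}$ is constant. By definition of $N_{\mathrm{sw}}$ there are at most $N_{\mathrm{sw}}+1$ such epochs: the initial one with $z=z_{0}=x_{0}$, and one more opened by each switch. On each epoch I would bound the distance travelled by $D$ and then sum over epochs. This gives the first inequality of~\eqref{eq:travel}. The bound on $T_{\mathrm{trans}}$ follows from the observation recorded before the lemma: every transit slot moves the antenna by exactly $\rho_{\max}$.

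Fix an epoch during which the target is some $z\in\mathcal{D}$, and let $x_{s}$ be the position at its first slot. The key step is that the approach rule~\eqref{eq:approach} is monotone toward a fixed target. Each step moves in the direction $\operatorname{sgn}(z-x_{t})$ by $\min(\rho_{\max},|z-x_{t}|)$, so it never overshoots $z$. Hence $|z-x_{t+1}|=|z-x_{t}|-|x_{t+1}-x_{t}|$, and the quantity $|z-x_{t}|$ is nonincreasing along the epoch. Once it reaches zero, all further steps within the epoch have length zero. Summing this telescoping identity over the epoch, the total displacement equals $|z-x_{s}|-|z-x_{\mathrm{end}}|\le|z-x_{s}|\le D$, since both points lie in $\mathcal{D}=[0,D]$. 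Applying this to every epoch gives the travel bound. No probabilistic argument is needed, so the bound holds in every realization, as the lemma claims.

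For the transit count, note that a transit slot has $x_{t+1}\neq z_{t}$. By~\eqref{eq:approach} this happens exactly when $|z_{t}-x_{t}|>\rho_{\max}$, and then $|x_{t+1}-x_{t}|=\rho_{\max}$. Therefore
\[
\rho_{\max}T_{\mathrm{trans}}\le\sum_{t=1}^{T}|x_{t+1}-x_{t}|\le D(N_{\mathrm{sw}}+1),
\]
which is the second inequality.

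The main obstacle is bookkeeping about indexing: the move at slot $t$ uses $z_{t}$, which is chosen after observing slot $t$, so the epoch boundaries must be aligned with the index of $z$ rather than of $x$. I would define each epoch as a maximal interval $\{t: z_{t}=z\}$ and attribute the step $x_{t}\to x_{t+1}$ to the epoch containing $t$. With that convention every step is attributed to exactly one epoch, and the monotonicity argument applies within each epoch.
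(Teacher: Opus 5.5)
Your proof is correct and takes the same approach as the paper: split the horizon into at most $N_{\mathrm{sw}}+1$ runs of constant target, bound each run's travel by $D$ using the monotone, non-overshooting approach rule, and use that every transit slot moves exactly $\rho_{\max}$. Three of your steps are slightly more careful than the paper's wording of the same argument: the explicit telescoping identity, attributing the step $x_t\to x_{t+1}$ to the run containing $z_t$, and the inequality $\rho_{\max}T_{\mathrm{trans}}\le\sum_t|x_{t+1}-x_t|$ where the paper says the transit count equals the travel divided by $\rho_{\max}$.
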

\begin{IEEEproof}
Partition the horizon at the switch slots into $N_{\mathrm{sw}}+1$ maximal runs on which the target is constant. On a run with target $z$, the rule~\eqref{eq:approach} moves $x_{t}$ monotonically toward $z$ and halts on arrival, so the distance covered on the run is at most $D$. Every transit slot displaces the antenna by exactly $\rho_{\max}$, so their number is the total travel divided by $\rho_{\max}$.
\end{IEEEproof}

A switch occurs only when the current target has fallen a margin $\Delta$ behind the best available position, and switches are therefore rare unless the acquisition function moves quickly. We measure its movement by $S_{T}\triangleq\sum_{t=1}^{T-1}\omega_{t}$ with $\omega_{t}\triangleq\sup_{x\in\mathcal{D}}|F_{t+1}(x)-F_{t}(x)|$.

\begin{lemma}[Switch count]
\label{lem:switch}
In every realization,
\begin{equation*}
    N_{\mathrm{sw}}\le 1+2S_{T}/\Delta.
\end{equation*}
\end{lemma}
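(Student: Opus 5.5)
The plan is a hysteresis argument: the persistence rule~\eqref{eq:hysteresis} lets a switch occur only after the acquisition function has moved by a total of more than $\Delta/2$ since the previous switch. Since these intervals of movement are disjoint and their sizes add up to at most $S_{T}$, the count follows. Throughout, $\max F_{t}$ is taken over the same set on which the target is chosen ($\mathcal{D}$ in~\eqref{eq:hysteresis}, or $\mathcal{X}_{g}$ in Algorithm~\ref{alg:moveucb}). Since $\omega_{t}$ is a supremum over all of $\mathcal{D}$, it also controls the variation on the grid, so the argument is identical in both readings.

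First, I define the \emph{gap} of the current target, $G_{t}\triangleq\max_{x}F_{t}(x)-F_{t}(z_{t-1})\ge 0$; a switch at slot $t$ happens exactly when $G_{t}>\Delta$. I then control how fast the gap can grow while the target is held fixed. Fix a point $z$ and set $\tilde G_{u}\triangleq\max_{x}F_{u}(x)-F_{u}(z)$. Since $\sup_{x}|F_{u+1}(x)-F_{u}(x)|=\omega_{u}$, the maximum changes by at most $\omega_{u}$, because a maximum of functions is $1$-Lipschitz in the sup norm. The value $F_{u}(z)$ also changes by at most $\omega_{u}$. Hence $\tilde G_{u+1}\le\tilde G_{u}+2\omega_{u}$.

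Next, consider two consecutive switch slots $s<t$. At $s$ the rule sets $z_{s}=\arg\max F_{s}$, so the gap of $z_{s}$ under $F_{s}$ is exactly $0$. On slots $s+1,\dots,t-1$ the target remains $z_{s}$, and the switch at $t$ requires $\max F_{t}-F_{t}(z_{s})>\Delta$. Telescoping the one-step bound from $u=s$ to $t-1$ gives
\begin{equation*}
\Delta<2\sum_{u=s}^{t-1}\omega_{u}.
\end{equation*}
The index ranges $[s,t-1]$ belonging to distinct consecutive pairs of switches are disjoint and contained in $\{1,\dots,T-1\}$. Summing over the $N_{\mathrm{sw}}-1$ such pairs therefore yields $(N_{\mathrm{sw}}-1)\Delta\le 2S_{T}$, which is the claim.

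The only delicate point is the first switch. The initial target $z_{0}=x_{0}$ is not a maximizer of anything, so its gap need not start at $0$, and the first switch may occur with no accumulated movement at all. This is exactly what the additive $1$ accounts for. The argument therefore starts counting from the first switch, which is also why no assumption on $x_{0}$ is needed and the bound holds in every realization.
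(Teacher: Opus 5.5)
Your proposal is correct and takes the same approach as the paper. The gap $G_t$ grows by at most $2\omega_t$ per slot while the target is held and vanishes at each switch, so each switch after the first consumes more than $\Delta/2$ of the disjoint increments of $S_T$, and the first switch is covered by the additive $1$.
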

\begin{IEEEproof}
Let $\Phi_{t}\triangleq\max_{x\in\mathcal{D}}F_{t}(x)$ and $G_{t}\triangleq\Phi_{t}-F_{t}(z_{t-1})\ge0$, so that~\eqref{eq:hysteresis} switches at slot $t$ if and only if $G_{t}>\Delta$. Whether or not a switch occurs at $t$, the identity $G_{t+1}-[\Phi_{t}-F_{t}(z_{t})]
=[\Phi_{t+1}-\Phi_{t}]-[F_{t+1}(z_{t})-F_{t}(z_{t})]$ holds, and its right-hand side is at most $2\omega_{t}$: the first bracket because a supremum is $1$-Lipschitz with respect to the uniform norm, the second pointwise. Let $t_{1}<\dots<t_{N_{\mathrm{sw}}}$ denote the switch slots and $\mathcal{I}_{j}\triangleq\{t_{j-1},\dots,t_{j}-1\}$. After a switch the target is the maximizer of $F_{t}$, so $\Phi_{t}-F_{t}(z_{t})$ vanishes at $t=t_{j-1}$; telescoping over $\mathcal{I}_{j}$ therefore gives $G_{t_{j}}\le2\sum_{t\in\mathcal{I}_{j}}\omega_{t}$, while a switch at $t_{j}$ requires $G_{t_{j}}>\Delta$. Each switch thus accounts for more than $\Delta/2$ of the surface movement, and the $\mathcal{I}_{j}$ being disjoint, summing over $j=2,\dots,N_{\mathrm{sw}}$ gives $(N_{\mathrm{sw}}-1)\Delta/2<\sum_{t=1}^{T-1}\omega_{t}=S_{T}$.
\end{IEEEproof}

It remains to bound $S_{T}$. The acquisition function moves for three reasons: the posterior changes as one observation enters the window and one leaves it, the reward function drifts, and the reference point $x_{t}$ of the movement cost term $\tilde{c}|x-x_{t}|$ moves. The last is $O(T/w)$ under~\eqref{eq:price} and is absorbed by the others, which Lemma~\ref{lem:surface} bounds as
\begin{equation}
  S_{T}=O\!\Big(\underbrace{a\,T\sqrt{\gamma_{w}/w}}
    _{\textnormal{forgetting}}
  +\underbrace{\frac{\sigma_{f}^{3}}{\varrho^{3/2}}\,w^{3/2}V_{T}}
    _{\textnormal{drift}}\Big)
  \label{eq:surface-bound}
\end{equation}
with probability at least $1-\delta/2$, where $a\triangleq\sigma_{f}\big(\varrho^{-1}(\beta\sigma_{f}+R) +\varrho^{-1/2}\beta\big)$. The forgetting term is what requires the reverse direction of Lemma~\ref{lem:chain}. Deleting the oldest sample restores the posterior variance at its position by an amount governed by the uncertainty that remains there once every later sample is accounted for, and that uncertainty admits the same bound as the forward direction supplies in the standard analysis.

The travel bound enters the regret through the actuation energy and the reward lost on transit slots. As detailed in Appendix~\ref{app:proof}, Assumption~\ref{as:reward} bounds the reward loss by $2g_{\max}$ per transit slot. Hence, by~\eqref{eq:travel}, the sum of the actuation energy and the transit reward loss is bounded by $C_0(N_{\mathrm{sw}}+1)$, where
\begin{equation}
  C_{0}\triangleq D\Big(c_{\mathrm{e}} +\frac{2g_{\max}}{\rho_{\max}}\Big).
  \label{eq:c0}
\end{equation}
Combining this bound with Lemma~\ref{lem:switch} yields the $O(C_0S_T/\Delta)$ term in the regret bound. A smaller $\Delta$ increases this contribution, which is the counterpart of the per-slot loss that a larger $\Delta$ incurs.

\subsection{Main Result}
\label{subsec:main}

The main result combines the travel bound of Section~\ref{subsec:travel} with the dwell-slot bound obtained from the supporting lemmas of Appendix~\ref{app:lemmas}.

\begin{theorem}[Regret of MoveUCB]
\label{thm:regret}
Suppose Assumptions~\ref{as:reward} and~\ref{as:noise} hold, and run Algorithm~\ref{alg:moveucb} with $\beta=B+\frac{R}{\sqrt{\varrho}}\sqrt{2(\gamma_{w}+1+\ln(2T/\delta))}$, grid spacing $\Delta_{g}=O(D/T)$, and margin $\Delta>0$. Then, with probability at least $1-\delta$,
\begin{equation}
  \mathcal{R}_{T}=O\Big(\underbrace{\beta T\sqrt{\gamma_{w}/w}}
    _{\textnormal{estimation}}
  +\underbrace{\frac{\sigma_{f}}{\sqrt{\varrho}}w^{3/2}V_{T}}
    _{\textnormal{staleness}}
  +\underbrace{\frac{C_{0}S_{T}}{\Delta}+\Delta T}
    _{\textnormal{movement and persistence}}\Big).
  \label{eq:regret-bound}
\end{equation}
\end{theorem}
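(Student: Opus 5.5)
We would prove Theorem~\ref{thm:regret} by decomposing the regret slot by slot against the oracle trajectory $\{x_t^\star\}$ attaining $\mathrm{OPT}_T$ in~\eqref{eq:oracle}, and splitting the horizon into dwell and transit slots. The oracle's own energy term $c_{\mathrm e}|x_t^\star-x_{t-1}^\star|$ is nonnegative, so we may drop it and upper bound $\mathrm{OPT}_T$ by $\sum_t g_t(x_t^\star)$. The learner's energy cost and its reward loss on transit slots are then handled together by Lemma~\ref{lem:travel}. On a transit slot we charge at most $2g_{\max}$, and by~\eqref{eq:c0} the total of these two contributions is at most $C_0(N_{\mathrm{sw}}+1)$. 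Lemma~\ref{lem:switch} then turns this into $O(C_0 S_T/\Delta + C_0)$, and the $C_0$ constant is absorbed. It remains to bound $\sum_{\text{dwell } t} [g_t(x_t^\star)-g_t(x_{t+1})]$, where on a dwell slot the learner sits at $x_{t+1}=z_t$. Throughout we index so that the position played after the decision at $t$ is compared with $g_{t+1}$; the one-slot shift costs $\|g_{t+1}-g_t\|_\infty$ and is absorbed into the staleness term.

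\textbf{Confidence step.} We would first establish a sliding-window confidence bound. With probability $1-\delta/2$, for all $t$ and all grid points $x$,
\[
|\mu_t(x)-g_t(x)|\le \beta\sigma_t(x)+b_t,
\]
where $b_t$ is a bias due to drift inside the window. We would prove this by writing $r_\tau=g_t(x_\tau)+(g_\tau-g_t)(x_\tau)+\epsilon_\tau$ for $\tau\in\mathcal W_t$, and treating each part separately. The $g_t$ part gives the RKHS term $B\sigma_t$. The noise part is a self-normalized martingale term applied to the window (as in~\cite{chowdhury2017}), with a union bound over $T$ slots that produces the $\ln(2T/\delta)$ in $\beta$. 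The drift part is the linear functional $\mathbf k_t(x)^{\mathrm T}(\mathbf K_t+\varrho I)^{-1}\mathbf d_t$ with $\|\mathbf d_t\|_\infty\le\sum_{\tau\in\mathcal W_t}\|g_{\tau+1}-g_\tau\|_\infty$. We would bound it by $\|\mathbf k_t(x)^{\mathrm T}(\mathbf K_t+\varrho I)^{-1}\|_2\sqrt w\,\|\mathbf d_t\|_\infty\le (\sigma_f/\sqrt{\varrho})\sqrt{w}\,\|\mathbf d_t\|_\infty$. Summing over $t$, each increment of the drift is counted in at most $w$ windows, which gives $\sum_t b_t\le (\sigma_f/\sqrt\varrho) w^{3/2}V_T$, the staleness term. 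The discretization from $\mathcal D$ to $\mathcal X_g$ with $\Delta_g=O(D/T)$ uses the twice-differentiability of~\eqref{eq:kernel}, so that $g_t$ and $\mu_t$ are Lipschitz, and it costs $O(1)$ in total.

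\textbf{Optimism step.} On a dwell slot, the persistence rule~\eqref{eq:hysteresis} together with global maximization over $\mathcal D$ gives $F_t(z_t)\ge F_t(x^\star_t)-\Delta$. Expanding $F_t$, this reads
\[
\mathrm{UCB}_t(z_t)\ge \mathrm{UCB}_t(x_t^\star)-\Delta-\tilde c|x_t^\star-x_t|.
\]
We then have $\mathrm{UCB}_t(x^\star_t)\ge g_t(x_t^\star)-b_t$ and $g_t(z_t)\ge \mathrm{UCB}_t(z_t)-2\beta\sigma_t(z_t)-b_t$. The per-slot dwell regret is therefore at most
\[
2\beta\sigma_t(z_t)+2b_t+\Delta+\tilde c D.
\]
Summing over $t$, the $\Delta$ terms give $\Delta T$, and $\tilde c D T=O(T/w)$, which is dominated by the estimation term. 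For $\sum_t \sigma_t(z_t)$, on dwell slots $z_t=x_{t+1}$ is the next sampled point, so $\sigma_t(x_{t+1})$ is the posterior at $x_{t+1}$ given the preceding window. We would split the horizon into blocks of length $w$. Within each block, conditioning on fewer points only increases variance, so each term is at most the posterior given the points of its block that precede it. The forward direction of Lemma~\ref{lem:chain}~\eqref{eq:chain-sum} with Cauchy--Schwarz then gives $O(\sqrt{w\gamma_w})$ per block. Over $T/w$ blocks this totals $O(T\sqrt{\gamma_w/w})$, which yields the estimation term $\beta T\sqrt{\gamma_w/w}$.

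\textbf{Assembly and main obstacle.} A union bound combines the confidence event with the event of~\eqref{eq:surface-bound}, which we need only implicitly since $S_T$ is left symbolic, to give probability $1-\delta$. The step we expect to be delicate is the confidence bound under drift with a sliding window. In particular, we must verify that the self-normalized bound still holds when the window discards old samples, which is why the log-determinant is restricted to $w$ points and the information-gain term is $\gamma_w$. We must also check that the drift bias sums to the stated $w^{3/2}V_T$ and not to something larger. A secondary subtlety is making sure the transit/dwell split does not double count. Transit slots are charged only through $C_0$, and dwell slots only through the optimism bound.
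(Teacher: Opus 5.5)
Your proposal is correct and follows the paper's proof essentially step for step. It drops the comparator's movement cost and shifts indices at a cost of $V_T+2g_{\max}$. It charges transit slots $2g_{\max}$ and bundles them with the energy into $C_0(N_{\mathrm{sw}}+1)$ via Lemmas~\ref{lem:travel} and~\ref{lem:switch}. It bounds dwell slots by the windowed confidence bound, the persistence invariant, and the blockwise forward chain sum. The only differences are cosmetic: you compare against the oracle trajectory rather than the per-slot maximizer $\arg\max_{x}g_t(x)$, which is immaterial since the optimism step holds against any point of $\mathcal{D}$, and your drift bound $\|(\mathbf{K}_t+\varrho\mathbf{I})^{-1}\mathbf{k}_t(x)\|_2\sqrt{w}\,\|\mathbf{d}_t\|_\infty$ yields the same $S_t$ as the paper's Cauchy--Schwarz in the $\mathbf{K}_t+\varrho\mathbf{I}$ metric.
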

\begin{IEEEproof}
See Appendix~\ref{app:proof}.
\end{IEEEproof}

The bound is established against the per-slot maximizer $\max_{x\in\mathcal{D}}g_{t}(x)$ rather than against the oracle~\eqref{eq:oracle}: since the movement cost is nonnegative and $\mathcal{A}(x_{t-1})\subseteq\mathcal{D}$, the oracle is upper-bounded by $\sum_{t}\max_{x\in\mathcal{D}}g_{t}(x)$. The comparison is possible even though no policy can follow the maximizer within the speed limit, because the slots on which the antenna is not at its target are counted in full, and Lemma~\ref{lem:switch} keeps their number sublinear whenever the variation budget is.

The four terms fall into two pairs, each representing a trade-off in one parameter. The first pair is that of a non-stationary kernelized bandit: a longer window $w$ averages more noise but retains staler observations. The second pair is the trade-off of Section~\ref{subsec:travel}: a smaller margin $\Delta$ keeps the target closer to the maximizer of $F_{t}$, at the expense of more frequent switching and hence more travel. The movement cost coefficient $\tilde{c}$ does not appear at all. Its residual, $\tilde{c}D$ per slot, sums to $O(T/w)$ under~\eqref{eq:price} and is absorbed by the estimation term, and the guarantee therefore holds with or without it. What $\tilde{c}$ governs is the distance of a switch rather than its frequency, a distinction that Lemma~\ref{lem:travel} does not resolve, since it counts every switch as a full traversal. Section~\ref{subsec:results} reports its effect over a finite horizon.

The rate follows by balancing the two pairs in turn. The margin appears only in the second pair, and equating its two terms gives
\begin{equation}
  \Delta = \Theta\big(\sqrt{C_{0}\bar{S}_{T}/T}\big),
  \label{eq:margin}
\end{equation}
where $\bar{S}_{T}$ denotes the bound~\eqref{eq:surface-bound}, which holds for every trajectory and is therefore independent of $\Delta$. Substituting~\eqref{eq:surface-bound} into~\eqref{eq:margin} splits the movement contribution $2\sqrt{C_{0}\bar{S}_{T}T}$ into a part that decreases in $w$ and a part that increases in it,
\begin{equation*}
  O\Big(\sqrt{C_{0}a}\,T(\gamma_{w}/w)^{1/4}\Big)
  +O\Big(\sqrt{C_{0}\sigma_{f}^{3}\varrho^{-3/2}}\,
  T^{1/2}w^{3/4}V_{T}^{1/2}\Big),
\end{equation*} and balancing the two gives $w=\tilde{\Theta}\big((T/V_{T})^{1/2}\big)$. At this window the estimation and staleness terms are both $\tilde{O}(T^{3/4}V_{T}^{1/4})$ and are dominated, and \eqref{eq:regret-bound} thus reduces to
\begin{equation}
  \mathcal{R}_{T}=\tilde{O}\big(T^{7/8}V_{T}^{1/8}\big),
  \label{eq:rate}
\end{equation}
where $\tilde{O}(\cdot)$ hides polylogarithmic factors in $T$. The regret is therefore \emph{sublinear} in $T$ whenever $V_{T}=o(T)$, with no further condition.

Two features of~\eqref{eq:rate} are worth noting. The exponent $7/8$ is set by the movement term, whereas the estimation and staleness terms retain the exponent $3/4$ of non-stationary kernelized bandits; the gap is the cost of a decision variable that must travel to where it is evaluated. The physical parameters $D$, $\rho_{\max}$, and $c_{\mathrm{e}}$ enter only through $C_{0}$, so a slower or more expensive actuator raises the constant but not the exponent.

\section{Simulation Results}
\label{sec:sim}

This section validates the theoretical analysis and evaluates MoveUCB against the movement-aware oracle and several baselines.

\subsection{Benchmarks}
\label{subsec:benchmarks}

MoveUCB is compared against the following baselines, all of which are charged the same movement cost $c_{\mathrm{e}}\sum_{t}|x_{t}-x_{t-1}|$ and subject to the same reachability constraint~\eqref{eq:reachable}.

\noindent\textbf{Oracle} computes~\eqref{eq:oracle} with non-causal knowledge of $\{g_{t}\}_{t=1}^{T}$. The objective and the constraint couple only consecutive positions, so it is solved exactly by backward dynamic programming on the grid $\mathcal{X}_{g}$. It defines the regret rather than competing with the other schemes.

\noindent\textbf{GP-UCB}~\cite{srinivas2010,chowdhury2017} maintains the same windowed estimator~\eqref{eq:posterior_mean} and~\eqref{eq:posterior_var} as MoveUCB, with the same kernel, window, and confidence parameter, and selects $x_{t+1}=\arg\max_{x\in\mathcal{A}(x_{t})\cap\mathcal{X}_{g}} [\mu_{t}(x)+\beta\sigma_{t}(x)]$. Since the maximization is confined to the reachable interval, no target beyond one slot of travel is ever evaluated, and the comparison with MoveUCB therefore isolates the effect of the two-stage selection.

\noindent\textbf{Sweep-and-hold} is a measure-then-optimize benchmark. Its first cycle coincides with the calibration described in Section~\ref{subsec:setup} and consumes none of the evaluated slots. In every subsequent cycle, of period $P$, it sweeps at full stroke for $T_{0}$ slots, reflecting at the boundaries of the region and recording the reward at each position visited, then returns to the position with the highest recorded value and holds there until the cycle restarts; unlike this initial calibration, these sweeps consume $T_{0}$ evaluated slots each and are charged in full. The period is selected from $P\in\{500,1000,2000\}$ in the partially blocked scenario and applied throughout.

\noindent\textbf{$\epsilon$-greedy} maintains the same estimator as MoveUCB and moves to the maximizer of the posterior mean within reach with probability $1-\epsilon$, $\epsilon=0.1$, and to a uniformly random reachable position otherwise. Its exploration is undirected, being guided by neither the posterior uncertainty nor the movement cost.

\noindent\textbf{Fixed} holds the initial position throughout, representing a conventional antenna. It incurs no movement cost and requires no estimation, so its regret measures the value of repositioning itself.

\subsection{Setup}
\label{subsec:setup}

All parameters are listed in Table~\ref{tab:params}; this subsection states only what they are chosen for. With the region length and stroke of Table~\ref{tab:params}, a traversal takes $100$ slots, or $0.5$~s, and the horizon of $T=15300$ slots spans $76.5$~s. The reward is the rate~\eqref{eq:reward} at a transmit SNR of $10$~dB, and its maximum $g_{\max}\approx3.96$ is attained when all paths add coherently. The movement cost coefficient is set so that a traversal costs $c_{\mathrm{e}}D=1.8$, less than half of $g_{\max}$; the energy is thus a secondary component of the movement cost, as anticipated in Section~\ref{subsubsec:price}.

Every policy except the fixed antenna is calibrated from $T_{0}=100$ samples of the reward map at $t=0$: spatial samples of a single frozen channel snapshot, not measurements collected over a sequence of slots, so no reachability constraint applies to them. This models a site survey obtained before deployment (by ray-tracing or a prior measurement campaign, say) rather than real-time operation of the movable antenna, and consumes none of the $T$ evaluated slots. For the GP-based schemes---MoveUCB, GP-UCB, and $\epsilon$-greedy---the samples fill the window and calibrate $\sigma_{f}^{2}$; their evaluated trajectory, subject to the reachability constraint~\eqref{eq:reachable} throughout, begins at $x_{0}=D/2$ independently of the survey. For Sweep-and-hold, which has no estimator to calibrate, the same samples instead select its first holding position, so its evaluated trajectory begins already held there. The fixed antenna requires no such survey. All results are averaged over $48$ independent realizations of the channel, the blockage position, and the noise, and the error bars denote $\pm$ one standard error.

\begin{table}[t]
\centering
\caption{Simulation parameters.}
\label{tab:params}
\renewcommand{\arraystretch}{1.05}
\setlength{\tabcolsep}{3.5pt}
\resizebox{\columnwidth}{!}{%
\begin{tabular}{@{}llc@{\hspace{8pt}}lc@{}}
\toprule
& Parameter & Value & Parameter & Value \\
\midrule
\multirow{4}{*}{\rotatebox{90}{Phys.}}
& Carrier frequency $f_{c}$ & $30$~GHz
& Stroke $\rho_{\max}=vT_{\mathrm{a}}$ & $0.06\lambda$ \\
& Region length $D$ & $6\lambda$
& Movement cost $c_{\mathrm{e}}$ & $0.3$ per $\lambda$ \\
& Slot / actuation $T_{\mathrm{s}},T_{\mathrm{a}}$ & $5$, $1$~ms
& Initial position $x_{0}$ & $D/2$ \\
& Actuator speed $v$ & $0.6$~m/s
& Horizon $T$ / warm-up $T_{0}$ & $15300$ / $100$ \\
\cmidrule(r){1-5}
\multirow{4}{*}{\rotatebox{90}{Channel}}
& Paths $L$ (LoS $+$ scattered) & $3$ $(1+2)$
& Blockage $A_{\mathrm{b}}$ (blk. / dense) & $10$ / $0$~dB \\
& Path powers & $(0.6,0.2,0.2)$
& Shadow extent / edge & $D/2$, $0.3\lambda$ \\
& Angular sector & $[30^{\circ},150^{\circ}]$
& Transmit SNR & $10$~dB \\
& Angular spread / sub-rays & $2^{\circ}$ / $16$
& Noise level $\sigma_{\epsilon}$ & $0.05\,g_{\max}$ \\
\cmidrule(r){1-5}
\multirow{2}{*}{\rotatebox{90}{Drift}}
& Angular $(\rho_{\theta},\sigma_{\theta})$ & $(0.9999,\,0.1^{\circ})$
& Shadow drift $\sigma_{\mathrm{s}}$ & $0.01\lambda$ per slot \\
& Gain correlation $\rho_{\alpha}$ & $0.9997$
& Change-point $\sigma_{\theta}^{\mathrm{cp}}$ & $0.5^{\circ}$ \\
\cmidrule(r){1-5}
\multirow{5}{*}{\rotatebox{90}{Algorithm}}
& Kernel components $Q$ & $3$
& Window length $w$ & $300$ \\
& Fringe periods $\{\lambda_{q}\}$ & $\{4.15,1.81,1.04\}\lambda$
& Movement cost weight $\eta$ & $2$ \\
& Damping $\varsigma_{q}$, $\varsigma_{0}$ & $2\lambda$, $4\lambda$
& Persistence margin $\Delta$ & $0.1\,\sigma_{f}$ \\
& Weights $\sigma_{c}^{2},\sigma_{0}^{2},\sigma_{q}^{2}$
  & $\sigma_{f}^{2}\{\tfrac{1}{4},\tfrac{1}{4},\tfrac{1}{2Q}\}$
& ridge $\varrho$ & $0.1$ \\
& & & Grid spacing $\Delta_{g}$ & $\lambda/100$ \\
\bottomrule
\end{tabular}}
\end{table}

\subsection{Scenarios and Drift Models}
\label{subsec:scenarios}

Two axes are varied. The scenario determines how far a local search can fall short of the global optimum, and hence the potential gain from a global target search, whereas the drift model determines whether the variation budget satisfies the sublinearity condition $V_{T}=o(T)$ of Theorem~\ref{thm:regret}.

\noindent\emph{Scenarios.} In the \emph{dense-multipath} scenario the channel follows the multipath model~\eqref{eq:multipath} without obstruction, and the gap to the global optimum comes from multipath interference alone, the fringes of~\eqref{eq:power_profile} being sparse at $L=3$. In the \emph{partially blocked} scenario an obstacle attenuates the line-of-sight (LoS) path over a contiguous sub-interval $[s_{t},s_{t}+D/2]$, with soft edges that model diffraction around it, so that a search started in the shadow has no reachable direction of improvement and the gap widens further.

\noindent\emph{Drift models.} The \emph{continuous} model represents a persistently moving environment, such as a walking user or a scatterer in steady motion. The angles, gains, and obstacle position evolve at every slot: the angles by $\bar{\theta}_{\ell}^{(t+1)}=\rho_{\theta}\bar{\theta}_{\ell}^{(t)}+(1-\rho_{\theta})\bar{\theta}_{\ell}^{(0)}+\zeta_{\ell}^{(t)}$ with $\zeta_{\ell}^{(t)}\sim\mathcal{N}(0,\sigma_{\theta}^{2})$, the gains by a first-order Gauss--Markov process that preserves the path powers, and the obstacle position by a mean-reverting recursion. The rates are set so that the per-slot maximizer moves more slowly than the actuator, since a channel whose optimum outruns it cannot be tracked by any policy; at the values of Table~\ref{tab:params} the maximizer moves by $0.010\lambda$ per slot against a stroke of $0.06\lambda$. The drift process is statistically stationary and never settles, so the variation accumulates at a constant rate and $V_{T}=\Theta(T)$. The \emph{piecewise-stationary} model represents an environment reconfigured by discrete events rather than by continuous motion---a blockage clearing, a scatterer entering, the user reorienting. Such events are how the dominant paths change at millimeter-wave frequencies, and this is also the regime in which past observations retain lasting value. The geometry is held fixed except at $S$ change points placed uniformly at random, at each of which the angles are perturbed, the gains are partially refreshed, and the obstacle relocates. This gives $V_{T}=\Theta(S)$, so that sweeping $S$ varies the budget at a fixed horizon, and letting $S$ grow more slowly than $T$ realizes the regime $V_{T}=o(T)$.

\subsection{Results}
\label{subsec:results}

We first compare MoveUCB against the baselines, then examine the variation budget, the kernel, and, in an ablation study, the movement-aware components and the selection rule.

\noindent\emph{Comparison against the baselines.} Fig.~\ref{fig:main} reports the time-averaged regret and the total travel under the continuous drift model. MoveUCB attains the lowest regret in both scenarios with the least travel of any policy that moves, about half that of GP-UCB. The gap to GP-UCB is attributable to the movement-aware selection rule, since the two policies share the estimator, the kernel, and the window, each running at its own optimal confidence parameter of Fig.~\ref{fig:beta}, $\beta=1.5$ for MoveUCB and $\beta=3.5$ for GP-UCB. The rule lowers the regret by more than a third under blockage, where the reward is nearly flat inside the shadow and no position within one slot of travel is better than the current one: GP-UCB must explore aggressively to walk out, whereas MoveUCB selects a target outside the shadow and reaches it over several slots, target persistence holding that target fixed until it arrives. Once the obstacle is removed the gap narrows, indicating that part of it stems from the reachability of the optimum. Sweep-and-hold also evaluates positions beyond the reachable interval and therefore escapes the shadow, but it commits on single noisy observations, does not update during the hold, and sweeps the region again at every cycle. Target persistence gives MoveUCB the opposite behavior: under blockage it commits to $N_{\mathrm{sw}}=298$ journeys over the horizon and dwells in between, spending $10\%$ of the slots in transit, which is the behavior that Lemma~\ref{lem:travel} assumes and Lemma~\ref{lem:switch} bounds.

\begin{figure}[t]
\centering
\includegraphics[width=0.90\columnwidth]{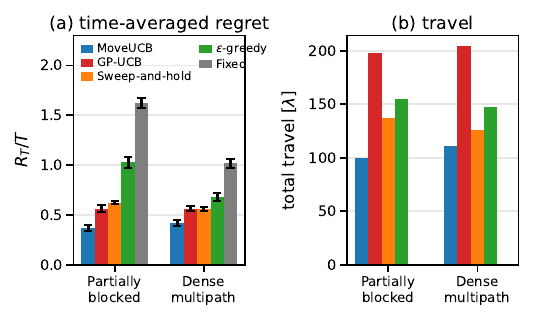}
\caption{Comparison against the baselines under the continuous drift model, in both scenarios, at $V_{T}/T\approx0.10$.}
\label{fig:main}
\end{figure}

\begin{figure}[t]
\centering
\includegraphics[width=0.90\columnwidth]{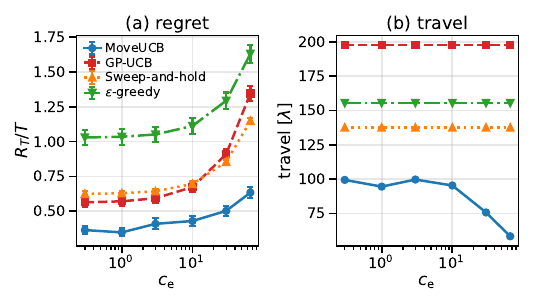}
\caption{Sensitivity to the movement cost coefficient $c_{\mathrm{e}}$,
partially blocked scenario, continuous drift model.}
\label{fig:ce}
\end{figure}

\noindent\emph{Sensitivity to the movement cost.} Fig.~\ref{fig:ce} examines whether the advantage of Fig.~\ref{fig:main} is an artifact of a small energy penalty, in the partially blocked scenario where that advantage is largest and therefore the more demanding test. At the calibrated $c_{\mathrm{e}}=0.3$, a lower travel barely registers in the regret, since the energy it saves is a small fraction of $g_{\max}$; Fig.~\ref{fig:ce}(a) shows what happens once the energy penalty is no longer negligible. The movement-unaware baselines travel a fixed distance regardless of $c_{\mathrm{e}}$, so their regret absorbs the energy penalty in full, whereas MoveUCB's movement cost term~\eqref{eq:price} lets it travel less as $c_{\mathrm{e}}$ grows---from $99\lambda$ at the default to $58\lambda$ at $c_{\mathrm{e}}=66$, the boundary of Remark~\ref{rem:budget}---so its margin over the closest competitor widens rather than narrows, from $0.20$ to $0.51$. Which baseline is closest also changes: GP-UCB, whose travel is largest, gives way to Sweep-and-hold once its own energy penalty catches up, between $c_{\mathrm{e}}=10$ and $30$. The ranking among the movement-unaware baselines is decided by travel alone once $c_{\mathrm{e}}$ dominates, and MoveUCB remains ahead of whichever of them travels least.

\noindent\emph{Dependence on the variation budget.} Fig.~\ref{fig:vt}(a) separates the two drift models by the budget they accumulate. The continuous model is stationary and its budget grows with a fitted exponent of $1.02$ on $T$, so the condition $V_{T}=o(T)$ does not hold, whereas the piecewise-stationary model with $S=\lceil T^{1/2}\rceil$ change points gives $0.48$ and the condition holds. Fig.~\ref{fig:vt}(b) sweeps the budget at a fixed horizon under the piecewise model, from a geometry reconfigured every $3060$ slots to one reconfigured every $51$, half the time a traversal takes. MoveUCB leads at every point, but which baseline comes closest depends on the budget, and the two exchange order because they fail for opposite reasons. Sweep-and-hold commits once per cycle, which works when a good position lasts for thousands of slots but fails when the held position degrades before the next sweep. GP-UCB re-explores its neighborhood at every slot, which costs it when the geometry is nearly static but makes it the least sensitive of the three to the budget. MoveUCB avoids both limitations. Its regret still grows with the budget, because crossing the region takes $D/\rho_{\max}=100$ slots and that cost pays off only while the chosen target remains good. Its advantage is therefore largest when the channel is reconfigured rarely relative to the time a relocation takes, precisely the regime for which a mechanical actuator is well suited.

\begin{figure}[t]
\centering
\includegraphics[width=0.95\columnwidth]{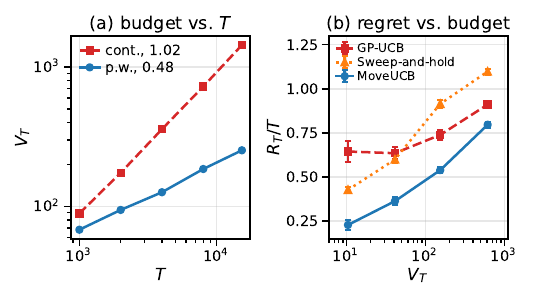}
\caption{Variation budget and its effect on the regret, partially blocked scenario.}
\label{fig:vt}
\end{figure}

\begin{figure}[t]
\centering
\includegraphics[width=0.90\columnwidth]{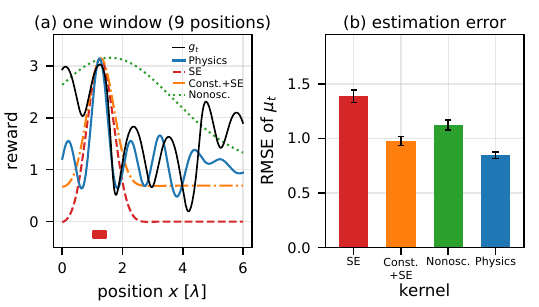}
\caption{Kernel comparison, partially blocked scenario, continuous drift model.}
\label{fig:kernel_abalation}
\end{figure}

\vspace{2pt}
\noindent\emph{Kernel.} The kernel enters the target decision through both the estimate $\mu_{t}$ and its uncertainty $\sigma_{t}$, and Fig.~\ref{fig:kernel_abalation} examines its contribution by measuring the root mean square error (RMSE) of $\mu_{t}$ over the region under the sample distribution MoveUCB itself produces. The comparison is against the squared exponential (SE) $k_{\mathrm{SE}}(x,x')=\sigma_{f}^{2}\exp(-(x-x')^{2}/(2\varsigma^{2}))$, the default choice in kernelized bandits, a constant-plus-SE kernel (Const.+SE) given by $\sigma_{f}^{2}/4+3k_{\mathrm{SE}}/4$, and a nonoscillatory variant (Nonosc.) that replaces each cosine in~\eqref{eq:kernel} with one while retaining the weights and damping lengths; all four satisfy $k(x,x)=\sigma_{f}^{2}$. The SE and Const.+SE
correlation lengths are selected separately from $\varsigma\in\{0.2,0.5,1,2,3,4\}\lambda$ on eight validation realizations, both selecting $0.5\lambda$. The proposed kernel (Physics) is given no such freedom, its hyperparameters being fixed offline from the angular sector and the angular spread. The window in Fig.~\ref{fig:kernel_abalation}(a) holds $300$ observations but covers only nine distinct positions, since the antenna dwells at its target most of the time. The estimate over the rest of the region is an extrapolation rather than an interpolation. All four kernels are evaluated on the same observation windows collected by MoveUCB, so that the comparison in Fig.~\ref{fig:kernel_abalation}(b) turns on the kernel alone, and the proposed kernel attains the lowest RMSE. The extrapolation over the unvisited part of the region requires representing the fringes away from a sample. Neither the SE nor the constant-plus-SE kernel does so, and the nonoscillatory variant loses them as well.

\begin{figure}[t]
\centering
\includegraphics[width=0.92\columnwidth]{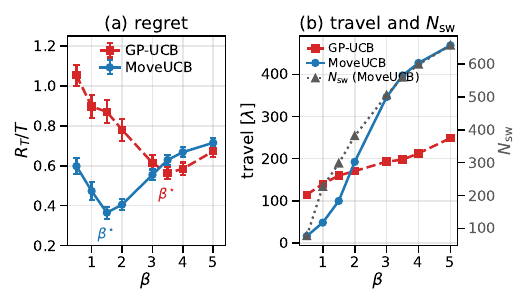}
\caption{Effect of the confidence parameter, partially blocked scenario, continuous drift model.}
\label{fig:beta}
\end{figure}

\vspace{2pt}
\noindent\emph{Confidence parameter.} 
Fig.~\ref{fig:beta} sweeps $\beta$ over a range that contains the optimum of both policies. MoveUCB attains its lowest regret at $\beta=1.5$, which is adopted as the default. Beyond this point MoveUCB degrades as $\beta$ grows, its travel increasing by an order of magnitude and the fraction of slots spent in transit approaching half the horizon. This is the mechanism of Section~\ref{subsec:travel} made visible: a larger $\beta$ raises the bonus $\beta\sigma_{t}$ that forgetting restores at every position the antenna has left, and once that bonus exceeds the margin $\Delta$ the persistence rule no longer holds the target, so the antenna spends the horizon traveling rather than at a position it selected. GP-UCB, which does not commit to a distant target, improves up to $\beta=3.5$, since aggressive exploration is how a policy confined to the reachable interval eventually escapes the shadow; its regret reaches a shallow minimum there and increases significantly at $\beta=5$. Even at its own optimum, GP-UCB remains $55\%$ above MoveUCB and travels twice as far. Each policy runs at its own optimum, $\beta=1.5$ for MoveUCB and $\beta=3.5$ for GP-UCB, in Figs.~\ref{fig:main} and~\ref{fig:vt}.

\noindent\emph{Ablations.} Fig.~\ref{fig:ablation} varies the two movement-aware components in turn, and they affect the regret differently. The weight $\eta$ leaves the regret unchanged within about one standard error on paired comparison, while the travel decreases monotonically from the case $\eta=0$, in which the movement cost term~\eqref{eq:price} reduces to the actuation energy alone. The term therefore reduces the travel rather than the regret, which is consistent with its role in the analysis: its residual is absorbed by the estimation term of Theorem~\ref{thm:regret}, so the guarantee holds with or without it, and it controls how far each switch travels rather than how often switches occur. Accordingly, the switch count varies little over the sweep, whereas $\Delta$ reduces it by nearly two orders of magnitude. We adopt $\eta=2$, which yields the lowest mean regret and reduces the travel by $40\%$ relative to $\eta=0$.

\begin{figure}[t]
\centering
\includegraphics[width=0.88\columnwidth]{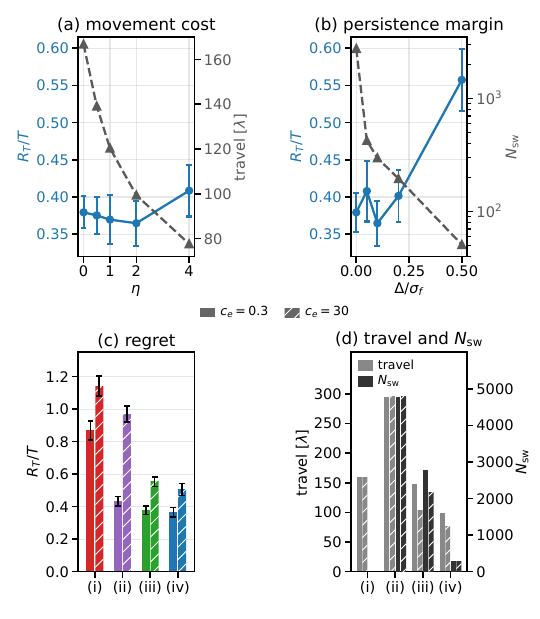}
\caption{Ablation of the two movement-aware components and of the selection rule, partially blocked scenario, continuous drift model: (i) local (GP-UCB's rule); (ii) global; (iii) global + cost; (iv) global + cost + persistence (MoveUCB). (a)--(b) use $c_{\mathrm{e}}=0.3$; (c)--(d) compare $c_{\mathrm{e}}=0.3$ (solid) and $c_{\mathrm{e}}=30$ (hatched). $N_{\mathrm{sw}}$ is undefined for (i) (no persisted target).}
\label{fig:ablation}
\end{figure}

The margin $\Delta$ introduces a trade-off between travel and regret. As $\Delta/\sigma_{f}$ increases from $0$ to $0.5$, the switch count decreases from $2770$ to $51$ and the travel with it, both monotonically, while the regret remains unchanged up to $\Delta=0.2\,\sigma_{f}$ and increases significantly beyond, being
higher at $\Delta=0.5\,\sigma_{f}$ on $42$ of $48$ realizations. This is the trade-off predicted by~\eqref{eq:margin}: the switch count decreases as $1/\Delta$, whereas the per-slot loss increases as $\Delta$. We adopt $\Delta=0.1\,\sigma_{f}$, which yields the lowest mean regret within the range where the travel has already been reduced and the regret has not yet increased.

Figs.~\ref{fig:ablation}(c)--(d) isolate the contribution of the selection rule by starting from GP-UCB's rule and adding one component at a time. (i) is GP-UCB's own rule, confined to the reachable interval. (ii) adds global search: the maximization ranges over the whole region, so the antenna sets out for the best position it sees but is drawn back to wherever the estimate has gone stale, since neither the movement cost term nor persistence is yet present to prevent it. (iii) further adds the movement cost term~\eqref{eq:price}. (iv), MoveUCB, further adds target persistence~\eqref{eq:hysteresis}. At $c_{\mathrm{e}}=0.3$, the largest gain is from (i) to (ii), which nearly halves the regret: this isolates the shadow-escape effect of Fig.~\ref{fig:main}, since neither the cost term nor persistence is present yet to explain it. From (ii) onward the regret changes little, while the travel and the switch count fall in two stages: the cost term alone roughly halves both, from $N_{\mathrm{sw}}=4778$ at (ii) to $2770$ at (iii), and persistence then reduces the count by a further order of magnitude, to $298$ at (iv).

Neither (i) nor (ii) prices the distance, so their travel at $c_{\mathrm{e}}=30$ is identical to that at $c_{\mathrm{e}}=0.3$ ($N_{\mathrm{sw}}$ likewise for (ii); it is undefined for (i)): the two policies cannot respond to $c_{\mathrm{e}}$ at all. This is where the movement cost term earns its keep: consistent with Fig.~\ref{fig:ce}, raising $c_{\mathrm{e}}$ to $30$ turns it from an efficiency device into a regret-relevant one, since (iii) and (iv) can now trade travel for regret while (i) and (ii) cannot. The mean regret reduction from (ii) to (iii) grows from $0.054$ at $c_{\mathrm{e}}=0.3$ to $0.417$ at $c_{\mathrm{e}}=30$, and the travel of (iii) and (iv) falls accordingly. What Section~\ref{subsec:main} anticipated as a term that only buys distance without a measurable cost in regret is therefore conditional on $c_{\mathrm{e}}$ being small: as it grows, the same term becomes what keeps the search accurate as well as affordable.

\section{Conclusion}\label{sec:conclusion}

This paper studied movable antenna position learning without channel knowledge, where a position is evaluated only by moving there and the channel drifts in the meantime. We formulated the problem as a non-stationary kernelized bandit under a reachability constraint and proposed MoveUCB, which combines a physics-informed kernel with a global target search, a persistence rule, and a transit-aware movement cost term. The regret is sublinear whenever the channel variation is sublinear, with no further condition, and the physical parameters enter only through constants. Simulations show that accounting for movement lowers the regret and the travel together: MoveUCB attains the lowest regret among the baselines while covering about half the travel of the closest one. Two directions remain open: arrays of multiple movable antennas, where the reward couples all positions and a selection rule must arbitrate among antennas as well as positions; and a displacement-dependent actuation interval, under which the cost of a relocation is coupled to the reward being estimated and the optimism argument must be revised.

\appendices
\section{Proof of Theorem~\ref{thm:regret}}
\label{app:proof}

Write $x_{t}^{\star}\in\arg\max_{x\in\mathcal{D}}g_{t}(x)$ for the comparator. Since the movement cost in~\eqref{eq:oracle} is nonnegative and $\mathcal{A}(x_{t-1})\subseteq\mathcal{D}$, we have $\mathrm{OPT}_{T}\le\sum_{t}g_{t}(x_{t}^{\star})$, so we need not reason about the oracle trajectory: the optimism argument of Lemma~\ref{lem:dwell} holds against any fixed position in $\mathcal{D}$, and $x_{t}^{\star}$ is admissible whether or not the learner could reach it. We work on the intersection of the confidence event of Lemma~\ref{lem:confidence} with the noise event $E_{\epsilon}$ of Lemma~\ref{lem:surface}, which has probability at least $1-\delta$ by the union bound. Substituting into~\eqref{eq:regret},
\begin{equation}
  \mathcal{R}_{T}\le
  \underbrace{\sum_{t=1}^{T}
  \mathbb{E}\big[g_{t}(x_{t}^{\star})-g_{t}(x_{t})\big]}_{%
  \triangleq\,\mathcal{R}^{\mathrm{val}}_{T}}
  +\underbrace{c_{\mathrm{e}}\sum_{t=1}^{T}
  \mathbb{E}\big|x_{t}-x_{t-1}\big|}_{%
  \triangleq\,\mathcal{R}^{\mathrm{mov}}_{T}}.
  \label{eq:decomposition}
\end{equation}

\noindent\emph{Index alignment.} The position $x_{t+1}$ is selected under $g_{t}$ but earns its reward under $g_{t+1}$, and Lemma~\ref{lem:dwell} accordingly controls $g_{t}(x_{t}^{\star})-g_{t}(x_{t+1})$, whereas $\mathcal{R}^{\mathrm{val}}_{T}$ pairs $g_{t}$ with $x_{t}$. Since $g_{t}(x_{t})\ge g_{t-1}(x_{t})-\|g_{t}-g_{t-1}\|_{\infty}$ for $t\ge2$, summing over $t=2,\dots,T$ and bounding the unmatched first slot by $g_{\max}$ gives $\sum_{t}g_{t}(x_{t})\ge\sum_{t\le T-1}g_{t}(x_{t+1})-V_{T}-g_{\max}$. Applying this to the learner's sum and isolating $t=T$,
\begin{equation}
  \mathcal{R}^{\mathrm{val}}_{T}
  \le\sum_{t=1}^{T-1}
  \mathbb{E}\big[g_{t}(x_{t}^{\star})-g_{t}(x_{t+1})\big]
  +V_{T}+2g_{\max}.
  \label{eq:shift}
\end{equation}

\noindent\emph{Slot-wise accounting.} On a dwell slot,
Lemma~\ref{lem:dwell} with $\tilde{x}=x_{t}^{\star}$ bounds the summand
of~\eqref{eq:shift} by
$2\beta\sigma_{t}(x_{t+1})+2S_{t}+\tilde{c}D+\Delta$; on a transit slot
it is at most $2g_{\max}$ by Assumption~\ref{as:reward}. Extending the
dwell sum to all slots, which is valid as every summand is nonnegative,
and applying Lemma~\ref{lem:optsum} to the first term and
Lemma~\ref{lem:confidence} to the second,
\begin{equation}
  \mathcal{R}^{\mathrm{val}}_{T}
  =O\Big(\beta T\sqrt{\gamma_{w}/w}
  +\frac{\sigma_{f}}{\sqrt{\varrho}}\,w^{3/2}V_{T}
  +\Delta T
  +g_{\max}T_{\mathrm{trans}}\Big),
  \label{eq:val-bound}
\end{equation}
the $V_{T}$ and $2g_{\max}$ of~\eqref{eq:shift} being absorbed into the staleness term and the constant, and the movement cost residual $\tilde{c}DT=O(T/w)$ by~\eqref{eq:price} being dominated by the estimation term $\Theta(T/\sqrt{w})$.

\noindent\emph{Combination.} Lemma~\ref{lem:travel} bounds the movement expenditure and the transit count by the same travel, so $\mathcal{R}^{\mathrm{mov}}_{T}+2g_{\max}T_{\mathrm{trans}}
\le(c_{\mathrm{e}}+2g_{\max}/\rho_{\max})D(N_{\mathrm{sw}}+1)=C_{0}(N_{\mathrm{sw}}+1)$ with $C_{0}$ as in~\eqref{eq:c0}, and Lemma~\ref{lem:switch} gives $N_{\mathrm{sw}}+1\le2+2S_{T}/\Delta$. Substituting this and~\eqref{eq:val-bound} into~\eqref{eq:decomposition} gives~\eqref{eq:regret-bound} up to the grid discretization, the additive constant $C_{0}$ being absorbed.

\noindent\emph{Discretization.} Lemma~\ref{lem:dwell} is stated for the exact maximizer of $F_{t}$ over $\mathcal{D}$, whereas Algorithm~\ref{alg:moveucb} maximizes over $\mathcal{X}_{g}$. The per-slot loss is at most $L_{U}\Delta_{g}$ with $L_{U}=O(L_{k}(B+R\sqrt{\gamma_{w}/\varrho}+\beta)+\tilde{c})$ and $L_{k}\triangleq\sup_{r\ne0}\sqrt{2(k(0)-k(r))}/|r|<\infty$ by the twice differentiability of~\eqref{eq:kernel} at the origin, so the total grid error is $O(L_{U}\Delta_{g}T)=O(1)$ under $\Delta_{g}=O(D/T)$. Lemma~\ref{lem:switch} is unaffected, its invariant being stated relative to the maximum actually computed.\hfill$\blacksquare$

\section{Proof of Supporting Lemmas}
\label{app:lemmas}

\subsection{Confidence Bound Under Drift}
\label{app:confidence}

The estimator of Section~\ref{subsec:kernel} faces two error sources: observation noise, and staleness, as the window retains samples of earlier functions $g_{\tau}$ that differ from the current $g_{t}$. The staleness term is bounded in the Euclidean norm of the drift vector rather than in its $\ell_{1}$ norm, which is what yields the $w^{3/2}V_{T}$ scaling in place of $w^{2}V_{T}$ and hence the exponent $3/4$ of the final rate. For the window $\mathcal{W}_{t}$ we write $V_{\mathcal{W}_{t}}\triangleq\sum_{\tau\in\mathcal{W}_{t},\,\tau<t}\|g_{\tau+1}-g_{\tau}\|_{\infty}$ for the drift accumulated over it, so that $V_{\mathcal{W}_{t}}\le V_{T}$ and, since each single-step variation belongs to at most $w$ windows, $\sum_{t=1}^{T}V_{\mathcal{W}_{t}}\le wV_{T}$.

\begin{lemma}[Time-varying confidence bound]
\label{lem:confidence}
Let Assumptions~\ref{as:reward} and \ref{as:noise} hold and let $\beta$
be as in Theorem~\ref{thm:regret}. Then, with probability at least
$1-\delta/2$, for all $t\ge1$ and all $x\in\mathcal{D}$,
\begin{equation}
  \big|g_{t}(x)-\mu_{t}(x)\big|\le\beta\sigma_{t}(x)+S_{t},
  \qquad
  S_{t}\triangleq\frac{\sigma_{f}}{\sqrt{\varrho}}\sqrt{w}\,
  V_{\mathcal{W}_{t}}.
  \label{eq:conf}
\end{equation}
Consequently,
\begin{equation}
  \sum_{t=1}^{T}S_{t}\le\frac{\sigma_{f}}{\sqrt{\varrho}}\,w^{3/2}V_{T}.
  \label{eq:staleness-sum}
\end{equation}
\end{lemma}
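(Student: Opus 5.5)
The plan is the standard sliding-window split of the estimation error into a noise part and a bias part. Fix $t$ and write $V_{t}\triangleq(\mathbf{K}_{t}+\varrho\mathbf{I})^{-1}$. The observation vector decomposes as $\mathbf{r}_{t}=\mathbf{g}^{(t)}+\mathbf{d}_{t}+\boldsymbol{\epsilon}_{t}$. Here $\mathbf{g}^{(t)}=[g_{t}(x_{\tau})]_{\tau\in\mathcal{W}_{t}}$ samples the current function, $\mathbf{d}_{t}=[g_{\tau}(x_{\tau})-g_{t}(x_{\tau})]_{\tau\in\mathcal{W}_{t}}$ is the staleness vector, and $\boldsymbol{\epsilon}_{t}$ collects the noise. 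Correspondingly,
\begin{equation*}
g_{t}(x)-\mu_{t}(x)=\big[g_{t}(x)-\mathbf{k}_{t}(x)^{\mathrm{T}}V_{t}\mathbf{g}^{(t)}\big]-\mathbf{k}_{t}(x)^{\mathrm{T}}V_{t}\boldsymbol{\epsilon}_{t}-\mathbf{k}_{t}(x)^{\mathrm{T}}V_{t}\mathbf{d}_{t}.
\end{equation*}
The first two terms form exactly the error of a stationary kernel ridge regressor for $g_{t}\in\mathcal{H}_{k}$ evaluated at the windowed positions. They are handled as in \cite{chowdhury2017}. The RKHS reproducing property together with $\|g_{t}\|_{\mathcal{H}_{k}}\le B$ bounds the bias part by $B\sigma_{t}(x)$. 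The self-normalized martingale inequality for $R$-sub-Gaussian noise gives $|\mathbf{k}_{t}(x)^{\mathrm{T}}V_{t}\boldsymbol{\epsilon}_{t}|\le\frac{R}{\sqrt{\varrho}}\sqrt{2(\gamma_{w}+1+\ln(2T/\delta))}\,\sigma_{t}(x)$. The information-gain term is $\gamma_{w}$ because at most $w$ samples enter, and Lemma~\ref{lem:chain} bounds the log-determinant of any $w$-point Gram matrix by $2\gamma_{w}$. Together these give $\beta\sigma_{t}(x)$.

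One subtlety in the noise step is that the window drops samples, so the noise vector is not the prefix of a single martingale. I would therefore apply the self-normalized bound separately to each window start index, and take a union bound over the at most $T$ windows at confidence $\delta/(2T)$. This is exactly where the $\ln(2T/\delta)$ in $\beta$ comes from, and it delivers probability $1-\delta/2$ uniformly over $t$ and $x$. I expect this to be the main technical care point, although it is routine.

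The staleness term is deterministic, and I bound it via Cauchy--Schwarz in the Euclidean norm: $|\mathbf{k}_{t}(x)^{\mathrm{T}}V_{t}\mathbf{d}_{t}|\le\|V_{t}\mathbf{k}_{t}(x)\|_{2}\|\mathbf{d}_{t}\|_{2}$. Since $V_{t}\preceq\varrho^{-1}I$ and $\mathbf{K}_{t}\succeq0$, we have $\|V_{t}\mathbf{k}_{t}(x)\|_{2}^{2}\le\varrho^{-1}\mathbf{k}_{t}(x)^{\mathrm{T}}V_{t}\mathbf{k}_{t}(x)\le\varrho^{-1}k(x,x)\le\sigma_{f}^{2}/\varrho$, where the middle inequality uses $\sigma_{t}^{2}(x)\ge0$ in \eqref{eq:posterior_var}. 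Each entry of the staleness vector satisfies $|g_{\tau}(x_{\tau})-g_{t}(x_{\tau})|\le\sum_{s=\tau}^{t-1}\|g_{s+1}-g_{s}\|_{\infty}\le V_{\mathcal{W}_{t}}$, and there are at most $w$ entries. Hence $\|\mathbf{d}_{t}\|_{2}\le\sqrt{w}\,V_{\mathcal{W}_{t}}$, and the staleness term is at most $\frac{\sigma_{f}}{\sqrt{\varrho}}\sqrt{w}V_{\mathcal{W}_{t}}=S_{t}$. Adding the three pieces by the triangle inequality gives \eqref{eq:conf}.

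Finally, \eqref{eq:staleness-sum} follows by summing $S_{t}$ over $t$ and using the counting fact stated before the lemma, $\sum_{t}V_{\mathcal{W}_{t}}\le wV_{T}$. This holds because each one-step variation $\|g_{s+1}-g_{s}\|_{\infty}$ lies in at most $w$ windows. The result is $\sum_{t}S_{t}\le\frac{\sigma_{f}}{\sqrt{\varrho}}\sqrt{w}\cdot wV_{T}=\frac{\sigma_{f}}{\sqrt{\varrho}}w^{3/2}V_{T}$.
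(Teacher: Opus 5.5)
Your proposal is correct and follows the paper's proof essentially step for step: the same three-way split of $g_t-\mu_t$ into RKHS bias, noise, and staleness, the Chowdhury--Gopalan self-normalized bound with a $\delta/(2T)$ union bound over windows to produce $\beta$, and Cauchy--Schwarz on the staleness term followed by $\sum_t V_{\mathcal{W}_t}\le wV_T$. The only cosmetic difference is in that Cauchy--Schwarz step. You apply it in the Euclidean norm and use $V_t^2\preceq\varrho^{-1}V_t$, whereas the paper applies it in the metric of $\mathbf{M}_t=\mathbf{K}_t+\varrho\mathbf{I}$ and uses $\mathbf{M}_t\succeq\varrho\mathbf{I}$ on the drift factor; both give the identical constant $\sigma_f\varrho^{-1/2}\sqrt{w}\,V_{\mathcal{W}_t}$.
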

\begin{IEEEproof}
Express each windowed observation with respect to the current function, $r_{\tau}=g_{t}(x_{\tau})+d_{\tau}+\epsilon_{\tau}$ with $d_{\tau}\triangleq g_{\tau}(x_{\tau})-g_{t}(x_{\tau})$, and let $\mathbf{M}_{t}\triangleq\mathbf{K}_{t}+\varrho\mathbf{I}$, $\mathbf{d}_{t}\triangleq[d_{\tau}]_{\tau\in\mathcal{W}_{t}}$, $\mathbf{g}_{t}\triangleq[g_{t}(x_{\tau})]_{\tau\in\mathcal{W}_{t}}$,
and $\boldsymbol{\epsilon}_{t}\triangleq[\epsilon_{\tau}]_{\tau\in\mathcal{W}_{t}}$.
By the linearity of \eqref{eq:posterior_mean} in the observation vector,
\begin{align}
  g_{t}(x)-\mu_{t}(x)
  &=\underbrace{\big(g_{t}(x)-\mathbf{k}_{t}(x)^{\mathrm{T}}
  \mathbf{M}_{t}^{-1}\mathbf{g}_{t}\big)
  -\mathbf{k}_{t}(x)^{\mathrm{T}}\mathbf{M}_{t}^{-1}
  \boldsymbol{\epsilon}_{t}}_{\text{static error}}\nonumber\\
  &\qquad-\mathbf{k}_{t}(x)^{\mathrm{T}}\mathbf{M}_{t}^{-1}\mathbf{d}_{t}.
  \label{eq:err-decomp}
\end{align}
The bracketed terms constitute the estimation error of a static kernel ridge regression with target $g_{t}$ and $|\mathcal{W}_{t}|\le w$ observations. The self-normalized confidence bound~\cite{chowdhury2017}, applied at slot $t$ with level $\delta/(2T)$, bounds them by $\beta\sigma_{t}(x)$ for all $x$ with probability at least $1-\delta/(2T)$, where $\beta$ absorbs the level through the $\ln(2T/\delta)$ term; a union bound over $t\le T$ gives $1-\delta/2$ simultaneously.

Regarding the staleness term, the Cauchy--Schwarz inequality in the metric induced by $\mathbf{M}_{t}$ gives $|\mathbf{k}_{t}(x)^{\mathrm{T}}\mathbf{M}_{t}^{-1}\mathbf{d}_{t}|
\le\|\mathbf{M}_{t}^{-1/2}\mathbf{k}_{t}(x)\|_{2} \|\mathbf{M}_{t}^{-1/2}\mathbf{d}_{t}\|_{2}$. The first factor satisfies $\|\mathbf{M}_{t}^{-1/2}\mathbf{k}_{t}(x)\|_{2}^{2} =k(x,x)-\sigma_{t}^{2}(x)\le\sigma_{f}^{2}$ by \eqref{eq:posterior_var}. For the second, $\mathbf{M}_{t}\succeq\varrho\mathbf{I}$ gives $\|\mathbf{M}_{t}^{-1/2}\mathbf{d}_{t}\|_{2}\le\varrho^{-1/2}\|\mathbf{d}_{t}\|_{2}$, and $|d_{\tau}|\le\|g_{\tau}-g_{t}\|_{\infty}\le V_{\mathcal{W}_{t}}$ for every $\tau\in\mathcal{W}_{t}$ by telescoping, so $\|\mathbf{d}_{t}\|_{2}\le\sqrt{w}\,V_{\mathcal{W}_{t}}$. Applying the triangle inequality to \eqref{eq:err-decomp} establishes \eqref{eq:conf}, and \eqref{eq:staleness-sum} follows from $\sum_{t}V_{\mathcal{W}_{t}}\le wV_{T}$.
\end{IEEEproof}

\subsection{Uncertainty Sums}
\label{app:sums}

The two sums below are the forward and the reverse direction of Lemma~\ref{lem:chain}, restricted to a block of $w$ samples: the first conditions each position on its past and governs how well the learner estimates, the second conditions it on its future and governs how far it travels.

\begin{lemma}[Uncertainty sums]
\label{lem:optsum}
For any sequence $\{x_{t}\}$, adaptively chosen or otherwise, let $\tilde{\sigma}_{t}$ denote the posterior standard deviation given the samples at $\{x_{\tau+1},\dots,x_{t+1}\}$ with $\tau=t-w+1$. Then
\begin{align}
  \sum_{t=1}^{T}\sigma_{t}^{2}(x_{t+1})
    &=O\big(T\gamma_{w}/w\big),
  \label{eq:optsum}\\
  \sum_{t=1}^{T}\sigma_{t}(x_{t+1})
  \;=\;\sum_{t=1}^{T}\tilde{\sigma}_{t}(x_{t-w+1})
    &=O\big(T\sqrt{\gamma_{w}/w}\big).
  \label{eq:backward}
\end{align}
\end{lemma}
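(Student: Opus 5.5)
Partition the horizon into consecutive blocks of length $w$, $\mathcal{B}_{j}=\{(j-1)w+1,\dots,jw\}$, $j=1,\dots,\lceil T/w\rceil$, and reduce every sum in the statement to one application of Lemma~\ref{lem:chain} per block. For~\eqref{eq:optsum}, fix a block and a slot $t$ in it. The window $\mathcal{W}_{t}$ contains every sample of the block that precedes $x_{t+1}$, and conditioning on additional points never increases the posterior variance: it is a Schur complement, monotone in the conditioning set. Hence $\sigma_{t}^{2}(x_{t+1})\le\sigma^{2}_{\mathcal{B}_{j},<t+1}(x_{t+1})$, the variance given only the earlier samples of the same block. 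The first slot of a block is bounded by $\sigma_{f}^{2}$, costing $O(\sigma_{f}^{2}T/w)$ in total, which is dominated. The forward half of~\eqref{eq:chain-sum}, applied to the $w$ positions of the block, bounds the block sum by $O(\gamma_{w})$. Summing over the $T/w$ blocks gives $O(T\gamma_{w}/w)$. The first half of~\eqref{eq:backward} then follows by Cauchy--Schwarz: $\sum_{t}\sigma_{t}(x_{t+1})\le\sqrt{T\sum_{t}\sigma_{t}^{2}(x_{t+1})}=O(T\sqrt{\gamma_{w}/w})$.

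For the reverse sum $\sum_{t}\tilde{\sigma}_{t}(x_{t-w+1})$, I would run the same blocking with the roles of past and future exchanged. Here $\tilde{\sigma}_{t}$ conditions on the $w$ samples $x_{t-w+2},\dots,x_{t+1}$, which all come after the evaluated position $x_{t-w+1}$. For $x_{i}$ with $i\in\mathcal{B}_{j}$, this conditioning set contains every later index of the same block, so monotonicity gives $\tilde{\sigma}^{2}\le\sigma^{2}_{i+1:\,jw}(x_{i})$. The reverse half of~\eqref{eq:chain-sum} bounds the block sum of these squared quantities by $O(\gamma_{w})$. The total is then $O(T\gamma_{w}/w)$, and Cauchy--Schwarz again yields $O(T\sqrt{\gamma_{w}/w})$. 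Indices with $t-w+1<1$ contribute at most $w\sigma_{f}=O(\sqrt{w}\cdot\sqrt{w})$. This is absorbed into the bound, or treated as a boundary constant.

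The step I expect to be delicate is the claimed \emph{equality} between the two sums in~\eqref{eq:backward}. Literally, they differ: one conditions on the past window, the other on the future window. I would read the equality as an equality of the common order bound, each being $O(T\sqrt{\gamma_{w}/w})$ by its own direction of Lemma~\ref{lem:chain}. If a genuine identity is required, I would attempt a reindexing $t\mapsto t+w$ under which $\tilde{\sigma}_{t}$ is evaluated at the sample about to leave the window. I would then reconcile the conditioning sets at the block level through the determinant identity~\eqref{eq:chain}, and I would accept that this may only hold after summation within each block. A second check is that all inequalities are pathwise. Lemma~\ref{lem:chain} is algebraic and monotonicity of Schur complements is deterministic, so adaptivity of $\{x_{t}\}$ causes no difficulty.
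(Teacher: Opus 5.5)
Your proposal is correct and follows the paper's proof essentially step for step: blocks of length $w$, monotonicity of the posterior variance to majorize each term by its within-block past (resp.\ future) conditional variance, one application of the forward (resp.\ reverse) half of Lemma~\ref{lem:chain} per block, and Cauchy--Schwarz. Reading the equality in~\eqref{eq:backward} as a shared order bound is right, because the paper's proof also bounds the two sums separately, so no identity between them is needed. For the reverse sum, the indices with $t-w+1<1$ are better dropped than absorbed: no sample leaves the window before slot $w$, so those terms do not arise.
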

\begin{IEEEproof}
Partition into $\lceil T/w\rceil$ consecutive blocks of exactly $w$, the last possibly shorter. For the forward sum, fix a block $\mathcal{B}$ with first slot $t_{\mathcal{B}}$: for $t\in\mathcal{B}$ we have $t_{\mathcal{B}}\ge t-w+1$, so the samples of $\mathcal{B}$ collected up to $t$ lie in $\mathcal{W}_{t}$, and conditioning on additional data cannot increase the posterior variance~\cite{rasmussen2006}, whence $\sigma_{t}(x_{t+1})\le\sigma_{t_{\mathcal{B}}:t}(x_{t+1})$. For the reverse sum, fix a block with last index $t_{\mathcal{B}}$: the conditioning set of $\tilde{\sigma}_{t}$ with $t=\tau+w-1$ is $\{\tau+1,\dots,\tau+w\}$, which contains the samples of $\mathcal{B}$ following $\tau$, whence $\tilde{\sigma}_{t}(x_{\tau})\le \sigma_{\tau+1:t_{\mathcal{B}}}(x_{\tau})$. Both majorants are the quantities appearing in the two directions of \eqref{eq:chain-sum} applied to the $w$ positions of the block, each $O(\gamma_{w})$ when summed over it. Summing over the blocks gives \eqref{eq:optsum}, and the Cauchy--Schwarz inequality $\sum_{t}\sigma_{t}\le(T\sum_{t}\sigma_{t}^{2})^{1/2}$, applied to each, gives \eqref{eq:backward}.
\end{IEEEproof}

\subsection{Surface Drift}
\label{app:surface}

Lemma~\ref{lem:switch} bounds the switch count by $S_{T}$, which the following lemma bounds in turn. Let $E_{\epsilon}$ denote the event $\sum_{t\le T}\epsilon_{t}^{2}=O\big(R^{2}(T+\ln(2/\delta))\big)$, which holds with probability at least $1-\delta/2$ by Bernstein's inequality, since $\epsilon_{t}^{2}$ is conditionally sub-exponential under Assumption~\ref{as:noise}, and write
$a\triangleq\sigma_{f}\big(\varrho^{-1}(\beta\sigma_{f}+R) +\varrho^{-1/2}\beta\big)$.

\begin{lemma}[Surface drift]
\label{lem:surface}
On $E_{\epsilon}$ and the confidence event of Lemma~\ref{lem:confidence},
\begin{equation}
  S_{T}=O\Big(a\,T\sqrt{\gamma_{w}/w}
  +\frac{\sigma_{f}^{3}}{\varrho^{3/2}}\,w^{3/2}V_{T}\Big).
  \label{eq:surface-bound-lem}
\end{equation}
\end{lemma}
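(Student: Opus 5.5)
Write $\omega_t=\sup_x|F_{t+1}(x)-F_t(x)|$. I would split $F_{t+1}-F_t$ into three parts and bound each uniformly in $x$: the change in $\mu$, the change in $\beta\sigma$, and the change in the movement term $\tilde c|x-x_{t+1}|-\tilde c|x-x_t|$. The last is at most $\tilde c\rho_{\max}$, which sums to $O(T/w)$ and is absorbed into the forgetting term.

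The window update $\mathcal{W}_t\to\mathcal{W}_{t+1}$ adds the sample at $x_{t+1}$ and deletes the oldest sample at $x_{t-w+1}$. I would factor it as an addition followed by a deletion, passing through the intermediate set $\{t-w+1,\dots,t+1\}$, and treat each rank-one step by standard GP update formulas.

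\emph{Adding a point $u$ with observation $r$.} The mean changes by
\[
\mathrm{cov}(x,u)\,(r-\mu(u))/(\sigma^2(u)+\varrho),
\]
where $|\mathrm{cov}(x,u)|\le\sigma(x)\sigma(u)\le\sigma_f\sigma(u)$. The variance changes by
\[
\mathrm{cov}^2/(\sigma^2(u)+\varrho)\le\sigma_f^2\sigma^2(u)/\varrho,
\]
so the standard deviation changes by at most $\sigma_f\sigma(u)/\sqrt{\varrho}$, using $|\sigma-\sigma'|\le\sqrt{|\sigma^2-\sigma'^2|}$. The residual $r-\mu(u)$ splits as $(g_{t+1}(u)-\mu(u))+\epsilon$. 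On the confidence event the first piece is at most $\beta\sigma_f+S_t$ plus a one-step drift $\|g_{t+1}-g_t\|_\infty$. Hence the mean change is at most
\[
\sigma_f\sigma(u)\,\varrho^{-1}\bigl(\beta\sigma_f+|\epsilon|+S_t+\|g_{t+1}-g_t\|_\infty\bigr).
\]

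\emph{Deleting a point.} Reversing the same formula, with the posterior quantities taken on the remaining set, gives the same form. The relevant uncertainty is now $\tilde\sigma_t(x_{t-w+1})$, the standard deviation at the deleted position given the later samples. This is where the reverse direction of Lemma~\ref{lem:chain}, through~\eqref{eq:backward}, enters.

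The addition step is controlled by the forward sum $\sum_t\sigma_t(x_{t+1})$ of Lemma~\ref{lem:optsum}.

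\emph{Summing over $t$.}
\begin{itemize}
\item Terms with $\beta\sigma_f\varrho^{-1}\sigma_f\sigma(u)$ and $\beta\sigma_f\varrho^{-1/2}\sigma(u)$ give $O(aT\sqrt{\gamma_w/w})$ via~\eqref{eq:backward}.
\item Noise terms $\sigma_f\varrho^{-1}|\epsilon|\sigma(u)$ are handled by Cauchy--Schwarz: $\sum|\epsilon_t|\sigma_t\le(\sum\epsilon^2)^{1/2}(\sum\sigma^2)^{1/2}=O(RT\sqrt{\gamma_w/w})$ on $E_\epsilon$, using~\eqref{eq:optsum} and its reverse analogue.
\item Drift terms use $\sigma(u)\le\sigma_f$ together with $\sum_tS_t\le\sigma_f\varrho^{-1/2}w^{3/2}V_T$ from~\eqref{eq:staleness-sum}. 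This produces $\sigma_f^3\varrho^{-3/2}w^{3/2}V_T$, which is the claimed drift term.
\end{itemize}

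The deleted sample's residual needs care. For $\tau=t-w+1$ it equals $r_\tau-\mu_{\setminus\tau}(x_\tau)$, and it must be bounded against $g_t$ rather than $g_\tau$. The staleness over the window is again at most $V_{\mathcal{W}_t}$, which feeds into the same $S_t$-type sum. The confidence bound of Lemma~\ref{lem:confidence} has to be applied on the intermediate set of $w+1$ samples, which only changes constants in $\beta$.

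\textbf{Main obstacle.} The deletion step is where I expect the main difficulty. The downdate formula expresses the removed influence through posterior quantities of the \emph{reduced} set, and the uncertainty at the removed point must be conditioned on exactly the later samples so that~\eqref{eq:backward} applies. I would first verify that the Schur-complement form of the downdate yields $\tilde\sigma_t(x_{t-w+1})$ with denominator $\tilde\sigma_t^2+\varrho\ge\varrho$. If that fails, I would fall back on bounding the mean change via the mean's representation with $\mathbf{M}^{-1}$ and the matrix-inverse downdate.
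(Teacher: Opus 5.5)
Your proposal is correct and follows essentially the same route as the paper. The paper factors the exchange through $\mathcal{V}=\mathcal{W}_t\cup\{t+1\}$ and reads the deletion of $x_\tau$ as the insertion of $x_\tau$ into $\mathcal{W}_{t+1}$, which gives exactly the denominator $\tilde\sigma^2(x_\tau)+\varrho\ge\varrho$ you were worried about; it then sums the terms via Lemma~\ref{lem:optsum}, Cauchy--Schwarz on $E_{\epsilon}$, and \eqref{eq:staleness-sum}, as you do. Your only superfluous step is applying the confidence bound on the $(w+1)$-sample set $\mathcal{V}$: both residuals are taken against posteriors on genuine windows ($\mu_t$ at $x_{t+1}$, $\mu_{t+1}$ at $x_\tau$), so Lemma~\ref{lem:confidence} applies as stated.
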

\begin{IEEEproof}
Write $F_{t+1}(x)-F_{t}(x)=[\mu_{t+1}(x)-\mu_{t}(x)]+\beta[\sigma_{t+1}(x)-\sigma_{t}(x)]-\tilde{c}[|x-x_{t+1}|-|x-x_{t}|]$. The third difference is at most $\tilde{c}|x_{t+1}-x_{t}|\le\tilde{c}\rho_{\max}$ uniformly in $x$ by the reverse triangle inequality, and by \eqref{eq:price} $\tilde{c}\rho_{\max}T=(c_{\mathrm{e}}\rho_{\max}+\eta\sigma_{f})T/w=O(T/w)$, which is dominated by the first term of \eqref{eq:surface-bound-lem}. The other two are perturbations of a kernel ridge estimator under the exchange of a single sample, $\mathcal{W}_{t+1}=\mathcal{V}\setminus\{\tau\}$ with $\mathcal{V}\triangleq\mathcal{W}_{t}\cup\{t+1\}$ and $\tau\triangleq t-w+1$. Throughout, $\tilde{\sigma}$ denotes the posterior standard deviation given $\mathcal{W}_{t+1}$, the quantity of Lemma~\ref{lem:optsum}, and we use $|k_{A}(x,x')|\le\sigma_{A}(x)\sigma_{A}(x')\le\sigma_{f}\sigma_{A}(x')$.

Adjoining a sample cannot increase the posterior variance, so $\sigma_{\mathcal{V}}\le\sigma_{t}$ pointwise. The rank-one update of the posterior, read from $\mathcal{W}_{t+1}$ to $\mathcal{V}$, gives
\begin{equation}
  \sigma_{t+1}^{2}(x)-\sigma_{\mathcal{V}}^{2}(x)
  =\frac{k_{\mathcal{W}_{t+1}}(x,x_{\tau})^{2}}
  {\tilde{\sigma}^{2}(x_{\tau})+\varrho}
  \le\frac{\sigma_{f}^{2}}{\varrho}\,\tilde{\sigma}^{2}(x_{\tau}),
  \label{eq:var-update}
\end{equation}
and the same device applied to the insertion gives $\sigma_{t}^{2}(x)-\sigma_{\mathcal{V}}^{2}(x)\le\varrho^{-1}\sigma_{f}^{2}\sigma_{t}^{2}(x_{t+1})$. Since $\sigma_{\mathcal{V}}$ is at most both $\sigma_{t}$ and $\sigma_{t+1}$, whichever of the two is larger exceeds the other by at most its own excess over $\sigma_{\mathcal{V}}$, and $\alpha-\alpha'\le\sqrt{\alpha^{2}-\alpha'^{2}}$ for $\alpha\ge\alpha'\ge0$, so
\begin{equation}
  \sup_{x}\big|\sigma_{t+1}(x)-\sigma_{t}(x)\big|
  \le\frac{\sigma_{f}}{\sqrt{\varrho}}
  \max\big\{\tilde{\sigma}(x_{\tau}),\sigma_{t}(x_{t+1})\big\}.
  \label{eq:var-bound}
\end{equation}

Adjoining an observation $(y,r)$ to a set $A$ moves the posterior mean by
\begin{equation}
  \mu_{A\cup\{y\}}(x)-\mu_{A}(x) =\frac{k_{A}(x,y)}{\sigma_{A}^{2}(y)+\varrho}\big(r-\mu_{A}(y)\big),
  \label{eq:mean-update}
\end{equation}
a correlation factor at most $\sigma_{f}\sigma_{A}(y)/\varrho$ times the prediction residual at $y$. For the insertion, $A=\mathcal{W}_{t}$ and $y=x_{t+1}$, and the confidence event bounds the residual by $\beta\sigma_{f}+S_{t}+v_{t}+|\epsilon_{t+1}|$ with $v_{t}\triangleq\|g_{t+1}-g_{t}\|_{\infty}$. For the deletion, read \eqref{eq:mean-update} from $A=\mathcal{W}_{t+1}$ to $\mathcal{V}$ with $y=x_{\tau}$: the factor is at most $\sigma_{f}\tilde{\sigma}(x_{\tau})/\varrho$ and the residual is at most $\beta\sigma_{f}+S_{t+1}+V_{\mathcal{W}_{t+1}}+|\epsilon_{\tau}|$, the drift term now spanning the window.

Both \eqref{eq:var-bound} and the mean bound are linear in an uncertainty at a sampled position, at $x_{t+1}$ given the past and at $x_{\tau}$ given the future, both summed by Lemma~\ref{lem:optsum} to $O(T\sqrt{\gamma_{w}/w})$. The variance terms contribute $\beta\sigma_{f}\varrho^{-1/2}$ times that, and the $\beta\sigma_{f}$ terms of the mean bound $\varrho^{-1}\beta\sigma_{f}^{2}$ times it. For the noise terms, Cauchy--Schwarz and $E_{\epsilon}$ give $\sum_{t}\sigma_{t}(x_{t+1})|\epsilon_{t+1}|
\le(\sum_{t}\sigma_{t}^{2}(x_{t+1}))^{1/2} (\sum_{t}\epsilon_{t+1}^{2})^{1/2}=O(RT\sqrt{\gamma_{w}/w})$, identically for the deletion terms, so these contribute $\varrho^{-1}\sigma_{f}R$ times it; collecting the three gives $a$. The staleness terms contribute $\varrho^{-1}\sigma_{f}^{2}\sum_{t}S_{t}=O(\varrho^{-3/2}\sigma_{f}^{3}w^{3/2}V_{T})$ by \eqref{eq:staleness-sum}, which is the second term of \eqref{eq:surface-bound-lem} and absorbs the drift terms $O(\varrho^{-1}\sigma_{f}^{2}wV_{T})$.
\end{IEEEproof}

\subsection{Dwell Regret}
\label{app:dwell}

The persistence rule never lets the target fall more than $\Delta$ behind the best available position, and it is this invariant, rather than exact optimality of $z_{t}$, that the optimism argument consumes.

\begin{proposition}[Persistence invariant]
\label{prop:persist}
For every $t$, $F_{t}(z_{t})\ge\max_{x\in\mathcal{D}}F_{t}(x)-\Delta$.
\end{proposition}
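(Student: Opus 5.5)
The invariant follows directly from the two branches of the persistence rule~\eqref{eq:hysteresis}, applied at a fixed slot $t$; no induction over time is needed. We split into the cases \emph{hold} and \emph{switch}.

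In the hold branch, $z_{t}=z_{t-1}$, and the branch is taken precisely when $F_{t}(z_{t-1})\ge\max_{x\in\mathcal{D}}F_{t}(x)-\Delta$. Substituting $z_{t}=z_{t-1}$ gives the claim. In the switch branch, $z_{t}\in\arg\max_{x\in\mathcal{D}}F_{t}(x)$, so $F_{t}(z_{t})=\max_{x}F_{t}(x)\ge\max_{x}F_{t}(x)-\Delta$, since $\Delta>0$.

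Both branches are well defined because the maximum exists. On $\mathcal{D}=[0,D]$ the function $F_{t}$ is continuous: $\mu_{t}$ and $\sigma_{t}$ are continuous by continuity of the kernel~\eqref{eq:kernel}, and $|x-x_{t}|$ is continuous. Since $[0,D]$ is compact, the maximum is attained.

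The only delicate point is the discretization used in Algorithm~\ref{alg:moveucb}, where the maximum is taken over $\mathcal{X}_{g}$ rather than $\mathcal{D}$. There I would state the invariant relative to the computed maximum: line~7 holds $z$ exactly when $F_{t}(z)\ge F_{t}(z^{\star})-\Delta$, with $z^{\star}$ the grid maximizer, and the same two-case argument goes through verbatim. The gap $\max_{\mathcal{D}}F_{t}-\max_{\mathcal{X}_{g}}F_{t}\le L_{U}\Delta_{g}$ is then charged to the discretization paragraph of Appendix~\ref{app:proof}, which already accounts for it.
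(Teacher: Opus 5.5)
Your proof is correct and matches the paper's argument: in the switch branch the gap is zero, and in the hold branch the rule's own condition is the claim. Your extra remarks, on the maximum being attained and on stating the invariant relative to the grid maximum, agree with the discretization paragraph of Appendix~\ref{app:proof}.
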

\begin{IEEEproof}
If the condition in \eqref{eq:hysteresis} fails, $z_{t}$ is the maximizer of $F_{t}$ and the gap is zero; if it holds, $z_{t}=z_{t-1}$ and the condition is the claim.
\end{IEEEproof}

\begin{lemma}[Dwell regret]
\label{lem:dwell}
On the event of Lemma~\ref{lem:confidence}, every dwell slot $t$ and every $\tilde{x}\in\mathcal{D}$ satisfy
\begin{equation}
  g_{t}(\tilde{x})-g_{t}(x_{t+1})
  \le2\beta\sigma_{t}(x_{t+1})+2S_{t}+\tilde{c}D+\Delta.
  \label{eq:dwell}
\end{equation}
\end{lemma}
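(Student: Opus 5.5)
The plan is a standard optimism chain, with the persistence invariant of Proposition~\ref{prop:persist} taking the place of exact maximization. On a dwell slot we have $x_{t+1}=z_{t}$, so the position actually played is the persisted target. We work on the event of Lemma~\ref{lem:confidence}, which gives, for every $x\in\mathcal{D}$,
\[
\mu_{t}(x)-\beta\sigma_{t}(x)-S_{t}\le g_{t}(x)\le \mathrm{UCB}_{t}(x)+S_{t}.
\]

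The first step is to lower-bound the acquisition value at the target by that of the comparator $\tilde{x}$. Proposition~\ref{prop:persist} gives $F_{t}(z_{t})\ge\max_{x}F_{t}(x)-\Delta\ge F_{t}(\tilde{x})-\Delta$. This is where the global maximization in \eqref{eq:hysteresis} is essential: $\tilde{x}$ need not lie in $\mathcal{A}(x_{t})$.

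The second step unpacks $F_{t}$ using \eqref{eq:acq-fn}:
\[
\mathrm{UCB}_{t}(z_{t})\ge \mathrm{UCB}_{t}(\tilde{x})-\tilde{c}|\tilde{x}-x_{t}|+\tilde{c}|z_{t}-x_{t}|-\Delta\ge \mathrm{UCB}_{t}(\tilde{x})-\tilde{c}D-\Delta .
\]
The second inequality drops the nonnegative term $\tilde{c}|z_{t}-x_{t}|$ and uses $|\tilde{x}-x_{t}|\le D$.

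The third step chains the confidence bounds. From the upper side, $g_{t}(\tilde{x})\le\mathrm{UCB}_{t}(\tilde{x})+S_{t}$. From the lower side, applied at $x_{t+1}=z_{t}$,
\[
g_{t}(x_{t+1})\ge\mu_{t}(x_{t+1})-\beta\sigma_{t}(x_{t+1})-S_{t}=\mathrm{UCB}_{t}(x_{t+1})-2\beta\sigma_{t}(x_{t+1})-S_{t}.
\]
Subtracting these and inserting the second step gives
\[
g_{t}(\tilde{x})-g_{t}(x_{t+1})\le \mathrm{UCB}_{t}(\tilde{x})-\mathrm{UCB}_{t}(z_{t})+2\beta\sigma_{t}(x_{t+1})+2S_{t}\le 2\beta\sigma_{t}(x_{t+1})+2S_{t}+\tilde{c}D+\Delta,
\]
which is \eqref{eq:dwell}.

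The only delicate point is the identification $x_{t+1}=z_{t}$ and the time indexing. The posterior $\mu_{t},\sigma_{t}$ is built from $\mathcal{W}_{t}$, which includes slot $t$, and the confidence event is stated for $g_{t}$ at slot $t$. So both sides of the chain refer to the same function $g_{t}$, and no drift term appears beyond $S_{t}$; the mismatch with $g_{t+1}$ was already handled by \eqref{eq:shift}.

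The persistence invariant is used in exactly the form stated, relative to $\max F_{t}$ over $\mathcal{D}$. The grid version is deferred to the discretization paragraph of Appendix~\ref{app:proof}. Accordingly, here the maximum is read over $\mathcal{D}$, as the lemma is stated for the exact maximizer.
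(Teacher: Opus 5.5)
Your proposal is correct and follows the paper's proof step for step. You apply the persistence invariant at $\tilde{x}$, drop $\tilde{c}|z_{t}-x_{t}|\ge0$, and use $|\tilde{x}-x_{t}|\le D$ to get $\mathrm{UCB}_{t}(\tilde{x})\le\mathrm{UCB}_{t}(z_{t})+\tilde{c}D+\Delta$. You then chain this with the upper confidence bound at $\tilde{x}$ and the lower one at $x_{t+1}=z_{t}$, which is exactly the paper's argument.
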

\begin{IEEEproof}
By the upper bound in \eqref{eq:conf}, $g_{t}(\tilde{x})\le\mathrm{UCB}_{t}(\tilde{x})+S_{t}$. Since $\tilde{x}\in\mathcal{D}$, Proposition~\ref{prop:persist} applied at $\tilde{x}$ gives $\mathrm{UCB}_{t}(z_{t})-\tilde{c}|z_{t}-x_{t}|\ge\mathrm{UCB}_{t}(\tilde{x})-\tilde{c}|\tilde{x}-x_{t}|-\Delta$, and since $\tilde{c}|z_{t}-x_{t}|\ge0$ while $|\tilde{x}-x_{t}|\le D$, this gives $\mathrm{UCB}_{t}(\tilde{x})\le\mathrm{UCB}_{t}(z_{t})+\tilde{c}D+\Delta$. On a dwell slot $x_{t+1}=z_{t}$, and the lower bound in \eqref{eq:conf} applied at $x_{t+1}$ yields $\mathrm{UCB}_{t}(x_{t+1})\le g_{t}(x_{t+1})+2\beta\sigma_{t}(x_{t+1})+S_{t}$. Chaining the three gives \eqref{eq:dwell}.
\end{IEEEproof}


\bibliographystyle{IEEEtran}
\bibliography{lymarl_refs_journal}

@inproceedings{srinivas2010,
  author    = {Srinivas, Niranjan and Krause, Andreas and Kakade,
               Sham M. and Seeger, Matthias},
  title     = {Gaussian Process Optimization in the Bandit Setting:
               No Regret and Experimental Design},
  booktitle = {Proc. 27th Int. Conf. Machine Learning (ICML)},
  address   = {Haifa, Israel},
  pages     = {1015--1022},
  month     = jun,
  year      = {2010}
}

@inproceedings{chowdhury2017,
  author    = {Chowdhury, Sayak Ray and Gopalan, Aditya},
  title     = {On Kernelized Multi-Armed Bandits},
  booktitle = {Proc. 34th Int. Conf. Machine Learning (ICML)},
  address   = {Sydney, Australia},
  pages     = {844--853},
  month     = aug,
  year      = {2017}
}

@article{zhu2024movable,
  author  = {L. Zhu and W. Ma and R. Zhang},
  title   = {Movable Antennas for Wireless Communication:
             Opportunities and Challenges},
  journal = {IEEE Commun. Mag.},
  volume  = {62},
  number  = {6},
  pages   = {114--120},
  year    = {2024}
}

@article{wong2021fluid,
  author  = {K.-K. Wong and A. Shojaeifard and K.-F. Tong and
             Y. Zhang},
  title   = {Fluid Antenna Systems},
  journal = {IEEE Trans. Wireless Commun.},
  volume  = {20},
  number  = {3},
  pages   = {1950--1962},
  year    = {2021}
}

@article{ma2024multibeam,
  author  = {W. Ma and L. Zhu and R. Zhang},
  title   = {Multi-Beam Forming with Movable-Antenna Array},
  journal = {IEEE Commun. Lett.},
  volume  = {28},
  number  = {3},
  pages   = {697--701},
  year    = {2024}
}

@article{feng2024weighted,
  author  = {B. Feng and Y. Wu and X.-G. Xia and C. Xiao},
  title   = {Weighted Sum-Rate Maximization for Movable
             Antenna-Enhanced Wireless Networks},
  journal = {IEEE Wireless Commun. Lett.},
  volume  = {13},
  number  = {6},
  pages   = {1770--1774},
  year    = {2024}
}

@article{ning2025movable,
  author  = {B. Ning and S. Yang and Y. Wu and P. Wang and
             W. Mei and C. Yuen and E. Bj{\"o}rnson},
  title   = {Movable Antenna-Enhanced Wireless Communications:
             General Architectures and Implementation Methods},
  journal = {IEEE Wireless Commun.},
  volume  = {32},
  number  = {5},
  pages   = {108--116},
  year    = {2025}
}

@article{li2026trajectory,
  title={Trajectory optimization for minimizing movement delay in movable antenna systems},
  author={Li, Qingliang and Mei, Weidong and Zhang, Rui and Ning, Boyu},
  journal={IEEE Trans. Wireless Commun.},
  volume={25},
  pages={6986--6999},
  year={2026},
  publisher={IEEE}
}

@inproceedings{bogunovic2016,
  title={Time-varying Gaussian process bandit optimization},
  author={Bogunovic, Ilija and Scarlett, Jonathan and Cevher, Volkan},
  booktitle={Artificial Intelligence and Statistics},
  pages={314--323},
  year={2016},
  organization={PMLR}
}

@article{koren2017nips,
  title={Multi-armed bandits with metric movement costs},
  author={Koren, Tomer and Livni, Roi and Mansour, Yishay},
  journal={Advances in Neural Information Processing Systems},
  volume={30},
  year={2017}
}

@article{zeng2025csi,
  title={{CSI}-free position optimization for movable antenna communication systems: A derivative-free optimization approach},
  author={Zeng, Xianlong and Fang, Jun and Wang, Bin and Ning, Boyu and Li, Hongbin},
  journal={IEEE Wireless Commun. Lett.},
  volume={14},
  number={1},
  pages={53--57},
  year={2025},
  publisher={IEEE}
}

@article{wang2026throughput,
  author  = {H. Wang and Q. Wu and Y. Gao and W. Chen and
             W. Mei and G. Hu and L. Xu},
  title   = {Throughput Maximization for Movable Antenna Systems
             with Movement Delay Consideration},
  journal = {IEEE Trans. Wireless Commun.},
  volume  = {25},
  pages   = {883--899},
  year    = {2026}
}

@article{ding2026energy,
  title={Energy efficiency maximization for movable antenna communication systems},
  author={Ding, Jingze and Zhou, Zijian and Zhu, Lipeng and Zhao, Yuping and Jiao, Bingli and Zhang, Rui},
  journal={IEEE Trans. Wireless Commun.},
  volume={25},
  pages={2624--2638},
  year={2026},
  publisher={IEEE}
}

@article{cheng2024sumrate,
  author  = {Z. Cheng and N. Li and J. Zhu and X. She and
             C. Ouyang and P. Chen},
  title   = {Sum-Rate Maximization for Fluid Antenna Enabled
             Multiuser Communications},
  journal = {IEEE Commun. Lett.},
  volume  = {28},
  number  = {5},
  pages   = {1206--1210},
  year    = {2024}
}

@inproceedings{tae2026delay,
  author    = {Tae, Yunseob and Lee, Jeongjae and Hong, Songnam},
  title     = {Movement Delay-Aware Sum-Rate Maximization for Movable Antenna Systems},
  booktitle = {Proc. IEEE Veh. Technol. Conf. (VTC)},
  address   = {Boston, USA},
  month     = may,
  year      = {2026},
  note      = {to appear}
}

@inproceedings{cheung2019,
  author    = {W. C. Cheung and D. Simchi-Levi and R. Zhu},
  title     = {Learning to Optimize under Non-Stationarity},
  booktitle = {Proc. Int. Conf. Artif. Intell. Statist.
               (AISTATS)},
  pages     = {1079--1087},
  year      = {2019}
}

@article{cesabianchi2013,
  author  = {N. Cesa-Bianchi and O. Dekel and O. Shamir},
  title   = {Online Learning with Switching Costs and Other
             Adaptive Adversaries},
  journal = {Adv. Neural Inf. Process. Syst. (NeurIPS)},
  year    = {2013}
}

@article{besbes2019,
  author  = {Besbes, Omar and Gur, Yonatan and Zeevi, Assaf},
  title   = {Optimal Exploration--Exploitation in a Multi-Armed
             Bandit Problem with Non-Stationary Rewards},
  journal = {Stochastic Systems},
  volume  = {9},
  number  = {4},
  pages   = {319--337},
  year    = {2019}
}

@book{rasmussen2006,
  author    = {Rasmussen, Carl Edward and Williams, Christopher
               K. I.},
  title     = {Gaussian Processes for Machine Learning},
  publisher = {MIT Press},
  address   = {Cambridge, MA},
  year      = {2006}
}

@article{zhu2023modeling,
  title={Modeling and performance analysis for movable antenna enabled wireless communications},
  author={Zhu, Lipeng and Ma, Wenyan and Zhang, Rui},
  journal={IEEE Trans. Wireless Commun.},
  volume={23},
  number={6},
  pages={6234--6250},
  month={Jun.},
  year={2024},
  publisher={IEEE}
}

\end{document}